\newtheorem{thm}{Theorem}[section]
\newtheorem{theorem}{Theorem}[section]
\newtheorem{lemma}[theorem]{Lemma}
\newtheorem{corollary}[theorem]{Corollary}
\numberwithin{equation}{section}
\newcommand{\bee}{\begin{equation*}}
\newcommand{\eee}{\end{equation*}}
\newcommand{\be}{\begin{equation}}
\newcommand{\ee}{\end{equation}}
\newcommand{\ba}{\begin{align}}
\newcommand{\ea}{\end{align}}
\newcommand{\RRR}{\mathbb{R}^3}
\title{Scattering of EM waves by many small perfectly conducting or impedance bodies}
\author{A.G. Ramm \\
\small Kansas State University, Manhattan, KS 66506-2602, USA\\
\small \texttt{ramm@math.ksu.edu}
}
\date{}
\begin{document}
\maketitle

\begin{abstract}
A theory of electromagnetic (EM) wave scattering by many small particles of an arbitrary shape is developed.
The particles are perfectly conducting or impedance. For a small impedance particle of an arbitrary shape
 an explicit analytical formula is derived for the scattering amplitude. The formula holds
 as $a\to 0$, where $a$ is a characteristic size of the small particle and the wavelength is arbitrary but fixed.
The scattering amplitude for a small impedance particle is shown to be proportional to $a^{2-\kappa}$,
where $\kappa\in [0,1)$ is a parameter which can be chosen by an experimenter as he/she wants. The boundary
impedance of a small particle is assumed to be of the form $\zeta=ha^{-\kappa}$, where $h=$const, Re$h\ge 0$.
The scattering amplitude for a small perfectly conducting particle is proportional to $a^3$, it is much smaller than
that for the small impedance particle.

The many-body scattering problem is solved under the physical assumptions $a\ll d\ll \lambda$, where $d$ is the minimal distance
between neighboring particles and $\lambda$ is the wavelength. The distribution law for the small impedance particles is
$\mathcal{N}(\delta)\sim\int_{\delta}N(x)dx$ as $a\to 0$. Here $N(x)\ge 0$ is an arbitrary continuous function that can be chosen by
the experimenter and $\mathcal{N}(\delta)$ is the number of particles in an arbitrary sub-domain $\Delta$.
It is proved that the EM field in the medium
where many small particles, impedance or perfectly conducting, are distributed, has a limit, as $a\to 0$ and
a differential equation is derived for the limiting field.

On this basis the recipe is given for creating materials with a desired refraction coefficient by embedding
many small impedance particles into a given material.

\end{abstract}

\noindent\textbf{Key words:} electromagnetic waves; scattering; impedance bodies; small bodies. \\
\noindent\textbf{MSC:} 78A45; 78A25; Z8A40; 78M40; 35825; 35J2; 35J57.

\section{Introduction} \label{sec1}
Electromagnetic (EM) wave scattering is a classical area of research. Rayleigh stated in 1871. see \cite{Ray},  that  the main
 part of the field, scattered by a small body, $ka \ll 1$, where $k$ is the wave number and  $a$ is the characteristic size of
  the body, is the dipole radiation, but  did not give formulas for calculating this radiation for bodies of arbitrary shapes.
   For spherical bodies Mie (1908) gave a solution to EM wave scattering problem using separation of variables in the spherical
    coordinates. This method does not work for bodies of arbitrary shapes. Rayleigh and Mie concluded that EM field, scattered by
     a small body, is proportional to $O(a^3)$. We prove that the field scattered by a small impedance body (particle) of an
     arbitrary shape
is proportional to $a^{2-\kappa}$, where $\kappa\in [0,1)$ is a parameter which can be chosen by the experimenter as he/she wishes,
 see
formula  \eqref{eq1.2} below. Since $2-\kappa<3$,  it follows, for $a\to 0$, that the scattering amplitude for small impedance
 particle is much larger than the scattering amplitude for perfectly conducting or dielectric small particle. This conclusion may
 be of practical importance.

There is a large literature on low-frequency wave scattering and multiple scattering, see \cite{AK},\cite{D},
\cite{L}, \cite{M},  \cite{T}.

In this paper a theory of EM wave scattering by perfectly conducting and by impedance small bodies of arbitrary shapes is developed.
For one-body scattering problem explicit formulas for the scattering amplitudes are derived for perfectly conducting and for
impedance small bodies of arbitrary shapes. For many-body scattering problem the solution is given as a sum of explicit terms
 with the coefficients that solve a linear algebraic system. If the size of the small bodies $a \to 0$ and their number
  $M=M(a) \to \infty$, a limiting integral equation is derived for the field in the limiting medium. This equation allows
   us to obtain a local differential equation for the field in the limiting medium and to give explicit analytic formulas for
    the refraction coefficient of the limiting medium.

As a result we formulate a recipe for creating materials with a desired refraction coefficient by embedding many small impedance particles in a given material.

The methods developed in this paper were applied to acoustic problems in \cite{R632}, to heat transfer
in the medium where many small bodies are distributed in \cite{R624}, to wave scattering by many nano-wires
in \cite{R634}.

In Section \ref{sec2} the theory of EM wave scattering is developed for small perfectly conducting bodies (particles) of arbitrary
 shapes.

In Section \ref{sec3} the theory is developed for EM wave scattering by one impedance particle of an arbitrary shape.

In Section \ref{sec4} the theory is developed for EM wave scattering by many small impedance particles of an arbitrary shape.

In Section \ref{sec5} a recipe for creating materials with a desired refraction coefficient is given is given. The problem of
 creating materials with a desired magnetic permeability is solved.

Physical assumptions in this paper can be described by the inequalities:
\be \label{eq1.1}
	a \ll d \ll \lambda,
\ee
where $\lambda$ is the wavelength in $\RRR \setminus \Omega$,\,\, $\Omega$ is a bounded domain in which many small particles $D_m$ are distributed, $1 \le m \le M=M(a)$, $d$ is the minimal distance between neighboring particles.

The boundary impedance is assumed to be
\be \label{eq1.1'}
	\zeta_m = \frac{h(x_m)}{a^\kappa},
\ee
where $x_m \in D_m$ is an arbitrary point inside $D_m$, $h(x)$ is an arbitrary continuous function in $\Omega$ such that
Re$h \ge 0$, $\kappa \in [0,1)$ is a parameter. {\em  One can choose $h$ and $\kappa$ as one wishes.}

The distribution of the small impedance  particles in $D$ is given by the formula
\be \label{eq1.2}
	\mathcal{N}(\Delta):= \frac{1}{a^{2-\kappa}}\int_{\Delta} N(x)dx(1+o(1)), \quad a \to 0,
\ee
where $\Delta \subset \Omega$ is an arbitrary open set, $\mathcal{N}(\Delta)$ is the number of small particles in the set $\Delta$,
and $N(x) \ge 0$ is an arbitrary continuous function in $\Omega$.

{\em The experimenter can choose the function $N(x)\ge 0$ as he/she wishes.}

One has
\be \label{eq1.3}
	\mathcal{N}(\Delta) = \sum_{x_m \in \Delta} 1.
\ee
By $\omega$ the frequency is denoted, $k=\frac{\omega}{c}$ is the wave number, $c$ is the velocity of light in the air.

\section{Scattering by perfectly conducting particles.} \label{sec2}
\subsection{Scattering by one particle} \label{sec2.1}
The problem is to find the solution to Maxwell's equations
\be \label{eq2.1}
	\nabla\times E = i\omega \mu H, \quad \nabla \times H=-i\omega \epsilon E, \quad
	 \text{ in }D':= \RRR \setminus D,
\ee
where $D$ is the small body, $ka \ll 1$, $a=0.5$diam$D$, $\epsilon$ and $\mu$ are dielectric and magnetic constants of the medium in
 $D'$, $k=\omega \sqrt{\epsilon \mu}$, and the boundary condition is:
\be \label{eq2.2}
	[N,[E,N]]=0 \quad \text{ on } S:=\partial D.
\ee
Here and below $N:=N_s$ is the unit normal to $S$ pointing into $D'$, $[E,N]=E \times N$  is the vector product of two vectors,
 $E\cdot N=(E,N)$ is the scalar product, $|S|$ is the surface area.

The incident field $E_0$ is:
\be \label{eq2.3}
	E_0=\mathcal{E} e^{ik\alpha\cdot x}, \qquad H_0=\frac{\nabla\times E_0}{i\omega \mu},
\ee
where $\alpha \in S^2$ is a unit vector, the direction of the incident plane wave, and it is assumed
that $ \mathcal{E}\cdot \alpha=0$. This assumption implies that
\be \label{eq2.4}
	\nabla\cdot E_0=0, \qquad \nabla \cdot H_0=0.
\ee
The field $E$ to be found is:
\be \label{eq2.3'}
	E=E_0+v_E,
\ee
where the scattered field $v_E$ satisfies the radiation condition
\be \label{eq2.5}
	r\left(\frac{\partial v_E}{\partial r}-ikv_E\right)=o(1),\quad r:=|x|\to \infty.
\ee
In equation \eqref{eq2.5} the $o(1)$ is uniform with respect to the direction $\beta:=\frac {x} {r}$ of the
scattered field as $r\to \infty$.

The scattering amplitude $ A(\beta, \alpha, k)$ is defined as usual:
\be \label{eq2.6}
	v_E=\frac{e^{ikr}}{r} A(\beta, \alpha, k)+o\left(\frac{1}{r}\right), \quad r = |x|\to \infty,\quad \beta=\frac{x}{r}.
\ee
The magnetic field $H=H_0+v_H$,
\be \label{eq2.7}
	H=\frac{\nabla \times E}{i\omega \mu}, \quad v_H=\frac{\nabla\times v_E}{i\omega \mu}.
\ee
Let us look for the solution to the scattering problem \eqref{eq2.1}-\eqref{eq2.5} of the form:
\be \label{eq2.8}
	E=E_0+\nabla\times \int_S g(x,t)J(t)dt, \qquad g(x,t)=\frac{e^{ik|x-t|}}{4\pi|x-t|},
\ee
where $J$ is a {\em tangential field to $S$.} We assume that $S\in C^2$, that is, $S$ is twice continuously differentiable.

Equations \eqref{eq2.1} are satified if
\be \label{eq2.9}
	\nabla\times\nabla\times E=k^2 E, \qquad H=\frac{\nabla\times E}{i\omega \mu}.
\ee
Since $E_0$ satisfies equations \eqref{eq2.9}, these equations are equivalent to
\be \label{eq2.10}
	\nabla\times\nabla\times v_E=k^2 v_E, \qquad v_E=\frac{\nabla\times v_E}{i\omega \mu}.
\ee
Equation for $v_E$ is equivalent to the equations:
\be \label{eq2.11}
	(\nabla^2+k^2)v_E=0, \qquad \nabla\cdot v_E=0 \text{ in }D',
\ee
because $\nabla\times\nabla\times v_E=\nabla\nabla\cdot v_E-\nabla^2 v_E$ and $\nabla\cdot v_E=0$. Conversely, equations \eqref{eq2.11} are equivalent to \eqref{eq2.9} and to \eqref{eq2.1}.

The radiation condition is satisfied by
	$$v_E=\nabla\times \int_S g(x,t)J(t)dt$$
for any vector-function $J(t)$.

 The boundary condition \eqref{eq2.2} yields
\be \label{eq2.12}
	\frac{J}{2} +TJ:=\frac{J}{2} +\int_S [N_s,[\nabla_s g(s,t),J(t)]]dt=-[N_s,E_0],
\ee
where the formula
\be \label{eq2.13}
	\lim_{x\to s^-}[N,\nabla\times\int_S g(x,t)J(t)dt]=\frac{J(s)}{2}+TJ,
\ee
was used, see \cite{R635}.
Let us prove that equation \eqref{eq2.12} has a solution and this solution is unique in the space $C(S)$ of continuous on $S$
functions. This proves that the scattering problem can be solved by formula \eqref{eq2.8} with $J$ solving \eqref{eq2.12}.
\begin{thm} \label{thm2.1}
If $D$ is sufficiently small, then equation \eqref{eq2.12} is uniquely solvable in $C(S)$ and its solution $J$ is tangential to $S$.
\end{thm}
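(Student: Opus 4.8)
The plan is to read \eqref{eq2.12} as a second-kind integral equation $\tfrac{1}{2}J+TJ=f$ with $f:=-[N_s,E_0]$ and to settle its solvability by the Fredholm alternative, after first disposing of the tangentiality claim, which is almost free. Indeed, for \emph{any} density $J$ the value $TJ=\int_S[N_s,[\nabla_s g(s,t),J(t)]]dt$ is a cross product with $N_s$, hence orthogonal to $N_s$, i.e. tangential; and the right side $f=-[N_s,E_0]$ is tangential for the same reason. Writing the equation as $J=2(f-TJ)$ shows that every solution $J$ is automatically tangential to $S$. I would therefore work from the outset in the Banach space $C_t(S)$ of continuous tangential fields, on which the problem is naturally posed.

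Next I would prove that $T$ is compact on $C_t(S)$. The apparent difficulty is that $\nabla_s g(s,t)$ carries a nonintegrable singularity of order $|s-t|^{-2}$. However, for tangential $J$ the expansion $[N_s,[\nabla_s g,J]]=\nabla_s g\,(N_s\cdot J)-J\,(N_s\cdot\nabla_s g)$ kills the first term (since $N_s\cdot J=0$), and the remaining factor obeys $N_s\cdot\nabla_s g=-\frac{N_s\cdot(s-t)}{4\pi|s-t|^3}+\cdots$. On a $C^2$ surface one has $N_s\cdot(s-t)=O(|s-t|^2)$, so the kernel of $T$ is only weakly singular, of order $|s-t|^{-1}$. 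A weakly singular kernel on a compact $C^2$ surface defines a compact operator on $C_t(S)$; consequently $\tfrac{1}{2}I+T$ is Fredholm of index zero, and unique solvability is equivalent to injectivity.

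The heart of the matter is injectivity, and this is exactly where the hypothesis ``$D$ sufficiently small'' enters. Suppose $\tfrac{1}{2}J+TJ=0$ and set $w:=\nabla\times\int_S g(x,t)J(t)dt$. By construction $w$ solves \eqref{eq2.9} in $D'$ and satisfies the radiation condition \eqref{eq2.5}; by the jump relation \eqref{eq2.13} one of its one-sided tangential traces on $S$ vanishes. The exterior perfect-conductor uniqueness theorem handles the $D'$ side, forcing $w\equiv0$ there; crossing $S$ through the companion jump relation leaves a homogeneous interior Maxwell problem in $D$. That interior problem has nontrivial solutions only at its resonance wavenumbers, and these are of order $1/a$; since $ka\ll1$, the fixed $k$ lies below the first interior resonance once $D$ is small enough, so $w\equiv0$ in $D$ as well. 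The jump of the tangential traces across $S$ then equals $\pm J$, whence $J=0$. Thus $\tfrac{1}{2}I+T$ is injective, and by the Fredholm alternative it is boundedly invertible on $C_t(S)$, so \eqref{eq2.12} is uniquely solvable with tangential solution.

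The main obstacle is this injectivity step: for a fixed body $\tfrac{1}{2}I+T$ can genuinely fail to be injective at interior resonance values of $k$, so the argument must use smallness to push these resonances away. The cleanest quantitative underpinning is the rescaling $x=ay$: after rescaling to a body of unit size, the kernel of $T$ differs from its static ($k=0$) counterpart by a term of size $O(ka)$, and the static operator $\tfrac{1}{2}I+T_0$ is known to be invertible on the tangential fields of a $C^2$ surface. A perturbation estimate then makes $\tfrac{1}{2}I+T$ invertible for all sufficiently small $a$, which is an alternative, and more explicit, route to the same conclusion. Everything else—the tangentiality of $J$ and the compactness of $T$—is standard potential theory on $C^2$ surfaces and should present no trouble.
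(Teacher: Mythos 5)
Your overall architecture---tangentiality essentially for free, compactness of $T$ on tangential fields, Fredholm alternative, injectivity from smallness of $D$---matches the paper's proof, but one of your justifications is wrong as stated, and your injectivity step takes a genuinely different route from the paper's.

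The local error first: you claim that for tangential $J$ the term $\nabla_s g(s,t)\,(N_s\cdot J(t))$ in the kernel decomposition ``is killed since $N_s\cdot J=0$.'' This is false: $N_s$ is the normal at $s$ while $J(t)$ is tangential at $t$, so $N_s\cdot J(t)\neq 0$ in general; only $N_t\cdot J(t)=0$. What is true, and what the paper uses (see \eqref{eq2.15}--\eqref{eq2.16}), is the estimate $|N_s\cdot J(t)|\le c\,|s-t|\,|J(t)|$ on a $C^2$ surface, which tames the $O(|s-t|^{-2})$ singularity of $\nabla_s g$ to a weakly singular kernel $O(|s-t|^{-1})$. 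With this repair your compactness and Fredholm conclusions survive, so Theorem \ref{thm2.1} itself is unharmed. But the slip is not harmless in the larger scheme: the paper shows in \eqref{eq2.35f}--\eqref{eq2.41f} that precisely this term contributes at the \emph{same} order as $Q$ (it generates the matrix $\Gamma$ in \eqref{eq2.41f} and the factor $(I+\Gamma)^{-1}$ in Theorem 2.2), so treating it as zero would corrupt the asymptotics downstream even though it does not affect solvability here.

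On injectivity you argue differently from the paper. You pass from $v_E=0$ in $D'$ to a homogeneous \emph{interior Maxwell} problem via the vector jump relations (the tangential trace of $\nabla\times w$, $w=\nabla\times\int_S g(x,t)J(t)\,dt$, is continuous across $S$, while $[N,w]$ jumps by $\pm J$), and you exclude nontrivial interior solutions because the interior Maxwell eigenvalues of a body of size $a$ are of order $1/a$; this is the classical Colton--Kress-style argument \cite{CK} and it works, though you should state explicitly which trace is continuous across $S$, since that is what makes the interior problem homogeneous. The paper instead proves its Lemma 2.1: a duality/density argument showing that $\int_S J(t)\cdot G(t)\,dt=0$ for a family of divergence-free fields $G(t)=\int_{D'}g(x,t)F(x)\,dx$, dense in the relevant sense, forces $J=0$, under the condition that $k^2$ is not a \emph{Dirichlet eigenvalue of the scalar Laplacian} in $D$---again guaranteed by smallness since the lowest such eigenvalue is $O(a^{-2})$. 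Both proofs use smallness only to push an interior spectrum above $k^2$; yours stays entirely within standard potential theory but relies on the vector jump calculus, while the paper's avoids vector jump relations at the price of the density lemma. One caveat about your ``more explicit'' alternative via rescaling: the claimed unconditional invertibility of the static operator $\frac{1}{2}I+T_0$ on tangential fields is doubtful---at $k=0$ the interior exclusion degenerates (harmonic gradient fields are curl-free), and on non--simply connected surfaces the static operator can have a nontrivial finite-dimensional kernel---so either restrict that remark to simply connected $S$ or keep the spectral argument at fixed $k>0$, as in your main line.
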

\begin{proof}
Note that any solution to equation \eqref{eq2.12} is a tangential to $S$ field. To see this, just take
the scalar product of $N_s$ with both sides of equation \eqref{eq2.12}. This yields $N_s\cdot J(s)=0$.
In other words, $J$ is a tangential to $S$ field.

Let us check that the operator $T$ is compact in $C(S)$. This follows from the formula
\be \label{eq2.14}
	TJ=\int_S \left(\nabla_S g(s,t)N_s \cdot J(t)-J(t)\frac{\partial g(s,t)}{\partial N_s} \right)dt.
\ee
Indeed, if $J$ is a tangential to $S$ field then
\be \label{eq2.15}
	N_s\cdot J(s) = 0.
\ee
Since $S \in C^2$, relation \eqref{eq2.15} implies
\be \label{eq2.16}
	|N_s\cdot J(t)|=O(|s-t|)|J(t)|, \quad |\nabla_s g(s,t)N_s \cdot J(t)| \le O\left(\frac{1}{|s-t|}\right)|J(t)|.
\ee
Thus, the first integral in \eqref{eq2.14} is a weakly singular compact operator in $C(S)$. The second integral in \eqref{eq2.14}
is also a weakly singular compact operator in $C(S)$ because
\be \label{eq2.17}
	\left|\frac{\partial g(s,t)}{\partial N_s}\right|=O\left(\frac{1}{|s-t|}\right),
\ee
if $S\in C^2$.

Consequently, equation \eqref{eq2.12} is of Fredholm type in $C(S)$. The corresponding homogeneous equation has only the trivial
solution if $D$ is sufficiently small. This follows from the following argument.
The homogeneous version of equation \eqref{eq2.12} means that the function
$$v_E=\nabla\times \int_S g(x,t)J(t)dt$$
 solves equations \eqref{eq2.11}, satisfies the radiation condition \eqref{eq2.5}, and
\be \label{eq2.18}
	[N,v_E]=0 \quad \text{ on }S.
\ee
This implies that $v_E=0$ in $D'$.

 Lemma 2.1 (see below) implies that if  $v_E=0$ in $D'$
 then $J=0$.
This conclusion and the Fredholm alternative prove the existence and  uniqueness of the solution to  equation \eqref{eq2.12}.
The smallness of the body $D$ guarantees that $k^2$ is not a Dirichlet eigenvalue of the Laplacian in $D$.
Theorem 2.1 is proved.
\end{proof}% \hfill{$\Box$}
%\begin{lem} 	\label{lem2.1}

{\bf Lemma 2.1.} {\em
Assume that the following conditions hold:

a) $v_E=0$ in $D'$,

b) $J$ is tangential to $S$,

and

c) $k^2$ is not a Dirichlet eigenvalue of the Laplacian in $D$.

 Then $J=0$.
 }
%\end{lem}
%\begin{proof}
$$ $$
{\em Proof.} Denote $A:=\int_S g(x,t)J(t)dt$ and use the formula
\be \label{eq2.181}
	\int_{D'} \nabla\times A\cdot B dx=\int_{D'}A\cdot \nabla \times B dx-\int_S N\cdot[A,B] ds=\int_{D'}A\cdot \nabla \times B dx,
\ee
valid for any $B \in C_0^\infty(D')$. If $\nabla\times A=0$ in $D'$, then formula \eqref{eq2.181} yields
\be \label{eq2.182}
	\int_{D'} A\cdot \nabla\times B dx=0, \quad \forall B\in C_0^\infty(D').
\ee
Write this formula as
\be \label{eq2.183}
	\int_S dt J(t)\cdot \int_{D'}g(x,t)F(x)dx=0, \quad F:=\nabla\times B.
\ee
The set of vector-fields $F$ coincide with the set of divergence-free fields $\nabla\cdot F=0$ in $D'$,where  $F \in C_0^\infty(D')$.

The set
 of vector-fields
 $$G(t)=\int_{D'}g(x,t)F(x)dx,\quad \forall F\in C_0^\infty(D'),$$
 where it is not assumed that the condition $\nabla \cdot F=0$ holds, {\em is dense in the set $L^2(S)$ of vector fields}. Indeed, if there exists an $h \neq 0$ such that
\be \label{eq2.184}
	\int_S h(t)\int_{D'}g(x,t)F(x)dx dt=0, \quad \forall F\in C_0^\infty(D'),
\ee
and $w(x):=\int_S g(x,t)h(t)dt$, then
$$ \int_{D'}w(x) F(x)dx=0, \quad   \forall F\in C_0^\infty(D').$$
Thus,
\be \label{eq2.185}
	w(x)=\int_S g(x,t)h(t)dt=0 \quad \text{ in }D'.
\ee
Consequently,
\be \label{eq2.186}
	(\nabla^2+k^2)w=0 \quad \text{ in }D, \quad w=0 \text{\,\, on \,\,}S.
\ee
Since $k^2$ is not a Dirichlet eigenvalue of the Laplacian in $D$, equation \eqref{eq2.186} implies $w=0$ in $D$.
 Therefore, $w=0$ in $D\cup D'$. This implies $h=\frac{\partial w}{\partial N_+}-\frac{\partial w}{\partial N_-}=0$.
 Consequently,  the set $G(t)$ is dense in the set $L^2(S)$ of vector fields on $S$.

We claim that if $\nabla\cdot F=0$ in $D'$, where $F \in C_0^\infty(D')$, then $\nabla\cdot G=0$ on $S$.

Indeed,
\be \label{eq2.187}
	\nabla_t\cdot \int_{D'} g(x,t)F(x)dx=-\int_{D'} \nabla_x g(x,t)\cdot F(x)dx=\int_{D'}g(x,t)\nabla\cdot F(x)dx=0.
\ee
Conversely, if $\nabla\cdot G=0$ on $S$, then equations  \eqref{eq2.187} show that
$$\int_{D'}g(x,t)\nabla\cdot F(x)dx=0, \forall t \in S.$$

Let us use the local coordinate system with the axis $x_3$ directed along the outer normal $N_s$ to $S$, and $x_1(s), x_2(s)$
are coordinates along two orthogonal axes tangential to $S$. Let us denote by $e_1(s)$ and $e_2(s)$ the unit
vectors along these axes at a point $s\in S$.

Equation \eqref{eq2.183} can be written as
\be \label{eq2.188}
	 \int_S J(t)\cdot G(t)dt=0
\ee
for all smooth $G(t)$ such that $\nabla\cdot G=0$ on $S$,
$G=\int_{D'}g(x,t)F(x)dx,$ $\nabla \cdot F=0$.

Let $J(t)=J_1(t)e_1(t)+J_2(t)e_2(t)$ in the local coordinates. For an arbitrary small $\delta>0$ one can choose $G_1(t)$ and $G_2(t)$ such that
\be \label{eq2.189}
	||\overline{J}_1-G_1||_{L^2(S)}+||\overline{J}_2-G_2||_{L^2(S)} < \delta,
\ee
where the over-bar denotes the complex conjugate. With $G_1$ and $G_2$ so chosen, choose  $G_3$ such that
\be	\label{eq2.190}
	\nabla\cdot G=0 \text{\,\,\, on\,\,\,} S,
\ee
which is clearly possible. Then equation \eqref{eq2.188} yields
\be	\label{eq2.191}
	\int_S(|J_1|^2+|J_2|^2)dt = O(\delta).
\ee
Since $\delta >0$ is arbitrary small, relation \eqref{eq2.191} implies $J_1=J_2=0$. Therefore,  $J=0$.

Lemma 2.1 is proved. \hfill$\Box$
%\end{proof}

As was stated above, it follows from Lemma 2.1 and from the Fredholm alternative that equation \eqref{eq2.12}
 is uniquely solvable for any right-hand side if $k^2 \not\in \sigma(\Delta_D)$, that is, if $k^2$ is not a Dirichlet eigenvalue
 of the Laplacian in $D$. If $D$ is sufficiently small, which we assume since $a\to 0$, then a fixed number $k^2$ cannot
 be a Dirichlet eigenvalue  of the Laplacian in $D$ because the smallest Dirichlet eigenvalue  of the Laplacian in $D$
 is $O(\frac 1 {a^2})>k^2$ if $a\to 0$.
$$ $$
{\bf Remark 2.1}. {\em The assumption $k^2 \not\in \sigma(\Delta_D)$ can be discarded if $g(x,t)$ is replaced by
 $g_{\epsilon}(x,t)$, the Green function of the Dirichlet Helmholtz operator in the exterior of a ball
 $B_\epsilon:=\{x: |x|\le \epsilon\}$, where $\epsilon>0$ is chosen so that $k^2 \not\in \sigma(\Delta_{D\setminus B_\epsilon})$.
 This choice of $\epsilon>0$ is always possible (see \cite{R190}, p. 29).}

Let us denote by $V$ the operator that gives the tangential to $S$ component $v_{E\tau}$ of the unique solution $v_E$ to
the scattering problem \eqref{eq2.1}--\eqref{eq2.3}, \eqref{eq2.5}:
\be \label{eq2.192}
E=E_0+v_E,\qquad	v_{E\tau}=V(-[N,E_0]).
\ee
If the tangential component $v_{E\tau}$ is known, then $v_E$ is uniquely defined in $D'$. This is a known fact,
see, for example, \cite{R635}.
The operator $V$ is linear and bounded in $C(S)$. It maps $C(S)$ onto $C(S)$ and $v_E$ has the same smoothness as the data $[N,E_0]$.
For example, if $S \in C^\ell$, then $v_E \in C^\ell(D')$, where $\ell>0$.

Define
\be \label{eq2.19}
	Q:=\int_S J(t)dt.
\ee
From formulas \eqref{eq2.6}, \eqref{eq2.8} and \eqref{eq2.19} it follows that
\be \label{eq2.20}
	A(\beta,\alpha,k)=\frac{ik}{4\pi}[\beta,Q].
\ee
For body $D$ one has
\be \label{eq2.21}
	\int_S [N,E_0]ds =\int_D \nabla\times E_0 dx=\nabla\times E_0\, |D| = \nabla\times E_0 \, c_D a^3,
\ee
where $|D|$ is the volume of $D$ and $c_D>0$ is a constant depending on the shape of $D$.
For example, if $D$ is a ball of radius $a$, then $c_D=\frac{4\pi}{3}$.

One has the formula (see \cite{R635}, p.8):
\be \label{eq2.22}
	-\int_S \frac{\partial g(s,t)}{\partial N_S}ds=\frac{1}{2}+o(1), \quad a\to 0.
\ee

Since $N_s \cdot J(s) = 0$ and $S$ is $C^2-$smooth, it follows that $|N_s \cdot J(t)| \leq c|s - t||J(t)|$. Therefore
\be \label{eq2.35f}
I := \left| \int_S ds\int_S dt \nabla_s g(s,t) N_s \cdot J(t) \right| \leq c\int_S ds\int_S dt\frac{1}{|s - t|}|J(t)|,
\ee
and $I \leq O(a) \int_S |J(t)|dt$. If $I$ would satisfy the estimate $I = o(Q)$, as $a \to 0$, then the theory would simplify
 considerably and one would have $Q = - \nabla \times E |D|= - \nabla \times E c_D a^3$.
   Unfortunately, estimate $I = o(Q)$ is not valid, and one has to give a new estimate for the
   integral $I_1:= \int_S ds\int_S dt \nabla_s g(s,t) N_s \cdot J(t)$. To do this, integrate equation    \eqref{eq2.12}
   over $S$, use  equations   \eqref{eq2.14} and   \eqref{eq2.22}, and get

\be\label{eq2.36f}
Q+I_1=-c_Da^3 \nabla \times E_0.
\ee
Let us write $I_1$ as
\be\label{eq2.37f}
I_1=e_p\int_S \Gamma_{pq}(t)J_q(t)dt,
\ee
where $\{e_p\}_{p=1}^3$ is an orthonormal basis of $\mathbb{R}^3$,
\be\label{eq2.38f}
\Gamma_{pq}(t):= \int_S \frac {\partial g(s,t)}{\partial s_p}N_q(s)ds,
\ee
and the integral in formula \eqref{eq2.38f}  is understood as a singular integral.
Thus, equation \eqref{eq2.36f} takes the form
\be\label{eq2.39f}
(I+\Gamma)Q=-c_Da^3 \nabla \times E_0.
\ee
Here the constant matrix $\Gamma$ is determined from the relation
\be\label{eq2.40f}
\Gamma Q=e_p\int_S \Gamma_{pq}(t)J_q(t)dt,
\ee
the summation is understood over the repeated indices $p,q$, so $\Gamma$
is the matrix which sends a constant vector $Q$ onto the constant vector
$I_1$ defined by the equation  \eqref{eq2.37f}.

One can prove that the constant matrix $\Gamma$ exists and can be determined by equation
 \eqref{eq2.40f}, and the matrix $I+\Gamma$ is non-singular.

 To prove that a constant matrix  $\Gamma$ exists
 assume that for every $p=1,2,3,$ the set of functions $\{\Gamma_{pq}(t)\}_{q=1}^3$ is linearly independent in
 $L^2(S)$, $\int_S \Gamma_{pq}^2(t)dt\neq 0$ and $Q=\int_S J(t)dt\neq 0$. Here $J(t)=\sum_{q=1}^3 e_qJ_q(t)$.
 For a fixed $p$ let $M_p$ be the set in $L^2(S)$ orthogonal to the linear span of $\Gamma_{pq}(t)$.
 Then every function $J_q(t)$ can be represented as $J_q(t)=J^0_q(t)+ \sum_{j=1}^3c_{qj}\Gamma_{pj}(t)$,
 where $J^0_q\in M_p$ and $c_{qj}$ are constants. One has
 $$\sum_{q=1}^3 \int_S\Gamma_{pq}(t)J_q(t)=\sum_{q,j=1}^3c_{qj}\int_S \Gamma_{pq}(t)\Gamma_{pj}(t)dt:=
 \sum_{q,j=1}^3 c_{qj}\gamma_{p;qj},$$
 where $\gamma_{p;qj}$ is a constant non-singular matrix for each $p$ because the set $\{\Gamma_{pq}(t)\}_{q=1}^3$
 is assumed linearly independent. To satisfy equation   \eqref{eq2.40f} one has to satisfy the following
 equation:
  $$\sum_{q,j=1}^3 c_{qj}\gamma_{p;qj}=\sum_{q=1}^3\Gamma_{pq}Q_q.$$
  Since we assumed that $Q\neq 0$, at least one of the numbers $Q_q\neq 0$.
  If there is just one such number, say,  $Q_{q_1}\neq 0$ and $Q_q=0$ for $q\neq q_1$, then we set
 $\Gamma_{pq_1}=Q_{q_1}^{-1}\sum_{q,j=1}^3 c_{qj}\gamma_{p;qj}$,  $Q=e_{q_1}Q_{q_1}$ where
 there is no summation over $q_1$, and  $\Gamma_{pq}=0$ for $q\neq q_1$.
 If, for example, $Q_{q_b}\neq 0$, $b=1,2$, then we may set
 $\Gamma_{pq_b}=\frac 1 2 Q_{q_b}^{-1}\sum_{q,j=1}^3 c_{qj}\gamma_{p;qj}$ and  $\Gamma_{pq}=0$ for $q\neq q_b$, $b=1,2$.
 If $Q_q\neq 0$ for $q=1,2,3,$ then we may set  $\Gamma_{pq}=\frac 1 3 Q_{q}^{-1}\sum_{q,j=1}^3 c_{qj}\gamma_{p;qj}$.
 %If one, two,  or three quantities $Q_{q_b}$, $b=1,2,3,$ have the property  $Q_{q_1}+Q_{q_2}+Q_{q_3}\neq 0$, then one can choose
 %$\Gamma_{pq_b}=\frac {1}{\sum_{b=1}^3 Q_{q_b}}sum_{q,j=1}^3 c_{qj}\gamma_{p;qj}$. This amounts to the change of the weights
 %$\frac 1 2$ or $\frac 1 3$ by the weights $\frac {Q_{q_b}}{\sum_{b=1}^3 Q_{q_b}}$.

  A more physical choice of $\Gamma_{pq}$ is the following one:
  $$ \Gamma_{pq}:= \frac {\overline{Q_q}}{\sum_{m=1}^3 |Q_{m}|^2}\sum_{b,j=1}^3 c_{bj}\gamma_{p;bj}, \qquad \sum_{m=1}^3 |Q_{m}|^2 >0.$$
  The corresponding to this choice  weights are $ \frac {\overline{Q_q}}{\sum_{m=1}^3 |Q_{m}|^2}$, so that
  $\sum_{q=1}^3\Gamma_{pq}Q_q=\sum_{b,j=1}^3 c_{bj}\gamma_{p;bj}$.

  A simpler approach to finding $\Gamma=(\Gamma_{pq})$, which automatically leads to a diagonal matrix $\Gamma= \gamma I$ with
  a number $\gamma$ and the identity matrix $I$,  is to find $\gamma$ from the condition
  $|e_p\int_S\Gamma_{pq}(t)J_q(t)dt-ce_p\int_S J_p(t)dt|=min$, where the minimization is taken over the number $c$
  and $|\cdot|$ is the length of a vector.
  The solution of this minimization problem is $c_{min}:=\gamma=\frac { \sum_{p=1}^3 \overline{Q_p}X_p}{ \sum_{p=1}^3|Q_p|^2}$,
  where $X_p:=\int_S\Gamma_{pq}(t)J_q(t)dt$. For this choice
  of $\Gamma$ one has $(I+\Gamma)^{-1}=(1+\gamma)^{-1} I$.

  From the computational point of view it is simpler to use the formula with the diagonal $\Gamma$,
  to calculate the number  $c_\gamma:=(1+\gamma)^{-1}$, and to calculate the $Q$ by the formula
  $$Q=-c_\gamma c_D  a^3 \nabla \times E_0, \qquad c_\gamma:=(1+\gamma)^{-1}.$$

  The existence of the
 constant matrix $\Gamma_{pq}$ in equation   \eqref{eq2.40f} is proved.

 %assume the contrary. Then there are two matrices,
 %$\Gamma$ and  $\Gamma_1$, such that $(\Gamma-\Gamma_1)Q=0\,\, (*)$ for all $Q$. The set of these $Q$
 %contains three linearly independent vectors and is a linear set. Therefore, equation (*) shows
 %that the null-space of the matrix $ \Gamma-\Gamma_1$ is the whole space. This means that
 % $\Gamma=\Gamma_1$. To check that the set of $Q$ contains  three linearly independent vectors,
 % choose three different points $t_j$, $1\le j\le 3$, on $S$ and let $p_j$ be three constant linearly
 % independent vectors tangent to $S$ at the points $t_j$. Take a small neighborhood $\Delta_j$ of the point
 % $t_j$ and a tangent to $S$ vector field $P_j(t)$ with support in $\Delta_j$ such that
 % $|P_j(t)-c_jp_j|<\eta(t) \epsilon_j$,  where $|\cdot|$ is the length of a vector in $\mathbb{R}^3$,
 % $\eta(t)$ is a $C^\infty_0(S)$ cut-off function, $0\le \eta(t)\le 1$, $\eta(t)=1$
  %in a neighborhood of the point $t_j$ and $\eta(t)=0$ outside $\Delta_j$,
  %$c_j>0$ is a constant such that $\int_{\Delta_j}c_j p_j=p_j$, that is $c_j=\frac 1 {meas \Delta_j}$,
 % the function $\eta$ is chosen so that  $\int_{\Delta_j}\eta(t)dt=1$, and $\epsilon_j>0$
  %is a small number. Let $Q_j:=\int_S P_j(t)dt$. Then $|Q_j -p_j|<\epsilon_j$. If vectors $p_j$, $1\le j\le 3$,
 % are linearly independent and $\epsilon_j$ are sufficiently small, then vectors $Q_j$ are linearly independent.
  %Thus, our first claim is proved.

  To prove the second claim, namely, that    the matrix $I+\Gamma$ is non-singular, it is sufficient
  to prove that $\dim R(I+\Gamma)=3$, where $R(B)$ is the range of the matrix $B$. The range of the matrix
  $I+\Gamma$ consists of the vectors $-c_Da^3 \nabla \times E_0$. Let us check
  that the range of the set of vectors $ \{\nabla \times E_0\}$ equals to $3$,
  $\dim  \{\nabla \times E_0\} = 3$, where $E_0=\mathcal{E}e^{ik\alpha \cdot x}$, $\alpha\cdot \mathcal{E}=0$,
  $\alpha\in S^2$ and $\mathcal{E}$ runs through the set of arbitrary constant vectors.
   Since $\nabla \times E_0=ik[\alpha, \mathcal{E}]e^{ik\alpha \cdot x}$ and one can
   obviously choose three pairs of vectors $\mathcal{E}, \alpha$ such that the three vectors
   $[\alpha, \mathcal{E}]$ are linearly independent and  $\alpha\cdot \mathcal{E}=0$, the second claim is proved.

Since the matrix $I+\Gamma$ is non-singular, equation  \eqref{eq2.39f} yields a formula for $Q$:
\be\label{eq2.41f}
Q=-c_Da^3 (I+\Gamma)^{-1}\nabla \times E_0.
\ee

Let us formulate the result using the simplified diagonal form of the matrix $\Gamma$.
%\begin{theorem}\label{thm2.3}

{\bf Theorem 2.2.} {\em
One has
\be\label{eq2.43f}
Q =-c_Da^3 c_\gamma \nabla \times E_0, \quad a \to 0, \qquad c_\gamma:=(1+\gamma)^{-1}.
\ee
}
To use this result practically one has to solve numerically the integral equation \eqref{eq2.12} for $J$,
calculate $Q:=\int_SJ(t)dt:=\sum_{p=1}^3 e_pQ_p$, then calculate $\gamma= \frac { \sum_{p=1}^3 \overline{Q_p}X_p}{ \sum_{p=1}^3|Q_p|^2}$,
where $X_p:=\int_S\Gamma_{pq}(t)J_q(t)dt$, and then use formula  \eqref{eq2.43f}.
\vspace {4mm}

From formulas \eqref{eq2.3}, \eqref{eq2.20} and \eqref{eq2.43f} one calculates $A(\beta, \alpha, k)$.

\subsection{Many-body scattering problem.} \label{eq2.2}
Let $D_m$, $1 \le m \le M=M(a)$ be small perfectly conducting bodies of the characteristic size $a$,
 $x_m \in D_m$, $D:=\bigcup_{m=1}^M D_m$, $D_m \subset \Omega, D'=\RRR \setminus D$. Assume that $D_m$ are distributed
 in a bounded domain $\Omega$ according to the formula
\be \label{eq2.26}
	\mathcal{N}(\Delta)=\frac{1}{a^3} \int_{\Delta} N(x)dx(1+o(1)), \quad a\to 0,
\ee
where $\Delta\subset \Omega$ is an arbitrary open subset of $\Omega$,  $N(x)\ge 0$ is a continuous in $\Omega$ function which
can be chosen by the experimenter as he/she wishes.
Let us assume that relation \eqref{eq1.1} holds. If $\Omega$ is a cube with the size $L$, then
$$\left(\frac{L}{d}\right)^3=O(M)=O(\frac{1}{a^3}),$$
so $d=O(La)$. Therefore condition $d\gg a$ can hold if $L$ is sufficiently large. If $L$ is fixed, then the condition $d \gg a$
can hold if
$N \ll 1$, because under this assumption about $N$ one has:
  $$d=O\left(\frac{a}{\left(\int_\Omega N(x)dx\right)^{1/3}}\right)\gg a.$$

The many-body scattering problem consists of solving equations \eqref{eq2.1} with $D=\bigcup_{m=1}^M D_m$, with boundary
conditions \eqref{eq2.2}, where $S=\bigcup_{m=1}^M S_m$, and with radiation condition \eqref{eq2.5}. The solution to this
 problem is unique.

We look for the solution of the form
\be \label{eq2.27}
	E=E_0+\sum_{m=1}^{M} \nabla\times \int_{S_m} g(x,t)J_m(t)dt.
\ee
This formula can be written as
\be \label{eq2.28}
	E=E_0+\sum_{m=1}^{M} [\nabla g(x,x_m),Q_m]+f, \quad Q_m:=\int_{S_m} J_m(t)dt,
\ee
where
\be \label{eq2.29}
	f:=\sum_{m=1}^{M}\nabla\times \int_{S_m} (g(x,t)-g(x,x_m)J_m(t)dt:=
\sum_{m=1}^{M} f_m
\ee
Let us show that for all $m$ one has
\be \label{eq2.30}
	|f_m|\ll |I_m|:=|[\nabla g(x,x_m), Q_m]|, \quad a\to 0.
\ee
If \eqref{eq2.30} holds, then the asymptotically exact solution of the many-body scattering problem is of the form
\be \label{eq2.31}
	E=E_0+\sum_{m=1}^{M} [\nabla g(x,x_m),Q_m], \qquad a\to 0.
\ee
{\em This is a basic result: it reduces the solution to the many-body scattering problem to finding quantities $Q_m$ rather
 than to finding the vector-functions $J_m(t)$. Such a reduction makes it possible to solve the many-body scattering problem for
 so many particles that
 it was not possible to do earlier.}

Our assumption is $a\ll d\ll \lambda$. Since $k=\frac {2\pi}{\lambda}$, it follows that
$a\ll d\ll k^{-1}$.

 To check inequality \eqref{eq2.30}, note that
\begin{align}
	&|\nabla g(x,x_m)| \le O\left((k+d^{-1}) \frac{1}{d}\right)=O(\frac{1}{d^2}), \qquad |x-x_m|=d, \label{eq2.32} \\
	&|\nabla g(x,t)-\nabla g(x,x_m)| \le O\left(a(k+d^{-1})\frac{1}{d^2}\right), \quad |t-x_m|\le a. \label{eq2.33}
\end{align}
Thus, $|I_m|=O\left(|Q_m|\frac {1}{d^2}\right)$, $|f_m| \le O\left(|Q_m|a(k+d^{-1})\frac{1}{d^2}\right)$,
and $Q_m\neq 0$.
Consequently,
\be \label{eq2.34}
	%|f_m| \le |Q|O\left((k+d^{-1})\frac{1}{d^2}\right), \qquad
 \left|\frac{f_m}{I_m}\right|\le O\left(ka+ad^{-1}\right)\ll 1.
\ee
Note that our basic physical assumption  $a\ll d\ll \lambda$ implies $ka\ll ad^{-1}$ because
$k=\frac{2\pi}{\lambda}$, so $k\ll \frac 1 d$ and $ka\ll \frac a d$.

Let us define the notion of the {\em effective field $E_e$} acting on the j-th particle:
\be \label{eq2.35}
	E_e:=E_0(x)+\sum_{m\neq j}^{M} \nabla\times\int_{S_m} g(x,t)J_m(t)dt.
\ee
As $a\to 0$, the effective field is asymptotically equal to the full field because the radiation from one particle is
proportional to $O(a^3)$, See Theorem 2.3 in Section 2.1.

If
\be \label{eq2.36}
	ka+\frac{a}{d}\ll 1,
\ee
then, with the error negligible as $a\to 0$, one has
\be\label{eq2.37}
E=E_0+\sum_{m=1}^{M} [\nabla g(x,x_m),Q_m],
\ee
where
\be\label{eq2.38}
Q_m = -a^3 c_{D_m}(I+\Gamma)^{-1}\nabla\times E_e(x_m).
\ee
If the quantities $A_m:=(I+\Gamma)^{-1}(\nabla\times E_e)(x_m), 1 \le m\le M,$
%and $E_e(x_m) := E_m$
are found, then the solution of the
many-body scattering problem for perfectly conducting small bodies of an arbitrary shape can be found by
formulas \eqref{eq2.37}-\eqref{eq2.38}.

The shape of the small bodies enters only through the constants $c_{D_m}$, since $|D_m| =c_{D_m}a^3$.

In order to solve many-body scattering problem one needs to find the quantities $A_m$.
% and $E_m$.
Let us reduce the problem of finding $A_m$ and $E_m$ to solving linear algebraic systems (LAS).

Put $x = x_j, E_0(x_j) := E_{0j}$ in \eqref{eq2.37}, assume for simplicity that $c_{D_m} = c_D$ for all $m$,
that is, the small bodies are identical, let $m \neq j$ and get
\be\label{eq2.52f}
E_j = E_{0j} - c_D\sum_{m \neq j}^M \left[ \nabla g(x_j, x_m),  A_m  \right]a^3, \quad 1 \leq j \leq M.
\ee
There are $2M$ vector unknowns $A_m$ and $E_m$ in this LAS and $M$ equations. One needs another set of $M$ linear
equations for finding these unknowns.

 To derive these equations, apply the operator $(I+\Gamma)^{-1}\nabla \times$ to  equation \eqref{eq2.37},
  denote $A_{0j}:=(I+\Gamma)^{-1} (\nabla\times E_0)(x_j)$,
 and set $x=x_j, j \neq m,$ in the resulting equation. This yields a linear algebraic system (LAS):
\be \label{eq2.39}
	A_j=A_{0j}-c_Da^3\sum_{m\neq j}^{M}\Big((I+\Gamma)^{-1}\nabla_x \times [\nabla g(x,x_m), A_m  ]\Big)|_{x=x_j},
 \quad 1\le j\le M.
\ee
Formula
$$\nabla_x \times [F(x), A]=(A,\nabla)F(x)- A (\nabla, F(x)),$$
valid if the vector $A$ is independent of $x$, can be useful.

If $M$ is very large, then the order of the LAS \eqref{eq2.52f} - \eqref{eq2.39} can be {\em drastically reduced} by
 the following method.

Let\,\,  $\bigcup_{p=1}^P \Delta_p$\,\, be a partition of $\Omega$ into a union of cubes $\Delta_p$ with the
 side $b=b(a)$, $\lim_{a\to 0}b(a)=0$.
Assume that
\be \label{eq2.40}
	b \gg d \gg a, \quad \lim_{a\to 0} \frac{d}{b}=0.
\ee
At the points $x_m \in \Delta_p$ the values of $A_m$ and of $\nabla g(x,x_m)$, where $x \not\in \Delta_p$,
 are asymptotically equal as $a\to 0$. Therefore, equation \eqref{eq2.39} can be rewritten as
\be \label{eq2.41}
	A_q=A_{0q}-c_D\sum_{p\neq q} \Big((I+\Gamma)^{-1}\nabla_x \times[\nabla g(x,x_p), A_p ]\Big)|_{x=x_q}a^3 \sum_{x_m \in \Delta_p}1,
\ee
and equations \eqref{eq2.52f} can be transformed similarly. Here $x_p \in \Delta_p$  is an arbitrary point,
 $D_m \subset \Delta_p$, $x_m\in \Delta_p$, $D_m$  are small bodies in $\Delta_p$. Since $\Delta_p$ is small the quantities
$A_m$, $E_m$ and $g(x_j,x_m)$ for $x_m$ in $\Delta_p$ and $x_j\in \Delta_q$, $p\neq q$, are equal to $A_p$, $E_p$ and
 $g(x_q,x_p)$ respectively, up to the quantities of higher
order of smallness as $a\to 0$.

 By \eqref{eq2.26} one has
\be \label{eq2.42}
	a^3 \sum_{x_m \in \Delta_p}1=a^3\mathcal{N}(\Delta_p)=N(x_p)|\Delta_p|, \quad a\to 0,
\ee
where $|\Delta_p|$ is the volume of $\Delta_p$. Thus,
\be\label{eq2.57f}
E_q = E_{0q} -c_D \sum_{p\neq q}[\nabla g(x_q, x_p),  A_p ]N(x_p)|\Delta_p|, \quad 1 \leq q \leq P, \quad a \to 0,
\ee
\be \label{eq2.58f}
A_q=A_{0q}-c_D\sum_{p\neq q}^P \Big((I+\Gamma)^{-1}\nabla_x \times [\nabla g(x,x_p), A_p] N(x_p)|\Delta_p|\Big)|_{x=x_q},
 \quad 1\leq q \leq P,\quad a\to 0.
\ee
Equations \eqref{eq2.57f} - \eqref{eq2.58f}  is a LAS for $2P$ unknowns $A_{q}, E_q$, $P \ll M$.
Computational work can be considerably reduced if one solves first system \eqref{eq2.58f} for
$P$ unknown vectors $A_p$ and then calculate $P$
unknowns $E_p$ by formula \eqref{eq2.57f}.

 Since $P \ll M$,
 the order of the LAS \eqref{eq2.57f} - \eqref{eq2.58f} is {\em  much smaller than the order of
  LAS \eqref{eq2.52f} - \eqref{eq2.39}.}

A similar argument allows one to replace equation \eqref{eq2.37} by the following equation:
\be \label{eq2.45}
	E_{eq}=E_{0q}-c_D\Big(\nabla\times\sum_{p \neq q}^{P} g(x,x_p)((I+\Gamma)^{-1}\nabla\times E_e)(x_p) N(x_p)|\Delta_p|\Big)|_{x=x_q},
\ee
where the formula $[\nabla g(x), A]=\nabla \times (g A)$ was used. This formula is valid for a scalar function $g$ of $x$ and
a vector $A$, independent of $x$.

Formula \eqref{eq2.45} is a Riemannian sum for the following {\em limiting integral equation}:
\be \label{eq2.46}
	E(x)=E_{0}(x)-c_D\nabla\times\int_\Omega g(x,y)(I+\Gamma)^{-1}\nabla\times E(y)  N(y)dy.
\ee
The method used for the derivation of equation \eqref{eq2.46} in contrast
to the usual assumptions of the homogenization theory does not use periodicity assumption and the operator of
our problem does not have a
discrete spectrum.
%, and it is non-self-adjoint.

Let us state our result.
\begin{thm} \label{thm2.2}
If assumptions \eqref{eq2.36} hold, then the unique solution to the many-body scattering problem can be calculated by formula
\eqref{eq2.31}, where $Q_m$ are given in \eqref{eq2.38} and $(I+\Gamma)^{-1}(\nabla\times E_e)(x_m):=A_m$ and $E_e(x_m) := E_m$ are found from the LAS \eqref{eq2.52f} - \eqref{eq2.39}. The order of LAS \eqref{eq2.52f} - \eqref{eq2.39} can be drastically reduced
 if assumptions \eqref{eq2.40} hold, and one obtains LAS \eqref{eq2.57f} - \eqref{eq2.58f}
of the order $P<< M$. As $a\to 0$, the electric field in the medium tends uniformly to the limit $E(x)$ which satisfies
equation \eqref{eq2.46}.
%Equation \eqref{eq2.46} has a solution in $C^2(\Omega)$ and this solution is unique.
\end{thm}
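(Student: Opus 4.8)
The plan is to assemble the ingredients already established into the four assertions of the theorem: the representation \eqref{eq2.31}, the dipole formula \eqref{eq2.38}, the reduction to the two linear algebraic systems, and the limiting integral equation \eqref{eq2.46}. Uniqueness of the many-body solution has been noted before \eqref{eq2.27}, so I would take it as given and concentrate on the constructive/asymptotic content.

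First I would justify the representation \eqref{eq2.31}. Seek $E$ in the form \eqref{eq2.27}; each summand is the curl of a single-layer-type potential, so it automatically solves \eqref{eq2.11} in $D'$ and satisfies the radiation condition \eqref{eq2.5}, while the boundary conditions \eqref{eq2.2} on each $S_m$ determine the tangential fields $J_m$ through the uniquely solvable equation \eqref{eq2.12} (Theorem \ref{thm2.1} applied on each $S_m$). Taylor-expanding $g(x,t)$ about $t=x_m$ splits \eqref{eq2.27} into the dipole term $[\nabla g(x,x_m),Q_m]$ plus the remainder $f_m$ of \eqref{eq2.29}. The crucial step is the estimate \eqref{eq2.30}: the gradient bounds \eqref{eq2.32}--\eqref{eq2.33} together with $a\ll d\ll\lambda$ give \eqref{eq2.34}, so $|f_m/I_m|=O(ka+a/d)\ll1$ under \eqref{eq2.36}, and \eqref{eq2.31} follows.

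Next I would compute $Q_m$. Introducing the effective field $E_e$ of \eqref{eq2.35}, the $m$-th particle scatters as if isolated in the incident field $E_e$, so the one-body result \eqref{eq2.41f} yields $Q_m=-c_{D_m}a^3(I+\Gamma)^{-1}\nabla\times E_e(x_m)$, which is \eqref{eq2.38}; here I would invoke the already-proved invertibility of $I+\Gamma$ and the fact that, the single-particle radiation being $O(a^3)$, replacing the full field by $E_e$ on $S_m$ costs only a higher-order error. Evaluating \eqref{eq2.37} at the points $x_j$, then applying $(I+\Gamma)^{-1}\nabla\times$ and evaluating again, produces the closed $2M$-dimensional system \eqref{eq2.52f}--\eqref{eq2.39} for $A_m,E_m$. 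For the order reduction I would partition $\Omega$ into cubes $\Delta_p$ of side $b(a)$ satisfying \eqref{eq2.40}: on each cube $A_m$ and the kernel $\nabla g(x,x_p)$ (for $x$ outside $\Delta_p$) are asymptotically constant, so the particle sums factor through $a^3\sum_{x_m\in\Delta_p}1=N(x_p)|\Delta_p|$ by the distribution law \eqref{eq2.26}, giving \eqref{eq2.42} and the reduced system \eqref{eq2.57f}--\eqref{eq2.58f} of order $P\ll M$. Recognizing \eqref{eq2.45} as a Riemann sum then suggests the limiting equation \eqref{eq2.46}.

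The hard part will be the uniform passage to the limit, i.e. controlling the accumulation of per-particle errors. Each body contributes a relative error $O(ka+a/d)$, but there are $M=O(a^{-3})$ bodies (equivalently $P=O(b^{-3})$ cubes), so one must verify that these errors, weighted by the kernel and summed, still tend to zero rather than compounding; this is exactly where the two-scale choice $a\ll d\ll b\ll\lambda$ and the continuity of $N$ are essential, since the intermediate scale $b$ lets one freeze $A_m$ and the smooth kernel on each cube while keeping $d/b\to0$. Closely tied to this is establishing the unique solvability of \eqref{eq2.46} and the uniform (in $x$) convergence of the Riemann sums \eqref{eq2.45} to the integral in \eqref{eq2.46}, for which the boundedness of the operator $V$ introduced in \eqref{eq2.192}, together with the smoothness of $g$ away from the diagonal and the continuity of $N$, are the key inputs. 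Once these are in hand, the discrete effective fields converge uniformly to the solution $E(x)$ of \eqref{eq2.46}, which completes the proof.
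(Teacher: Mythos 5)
Your proposal follows the paper's own route step for step: the dipole decomposition \eqref{eq2.27}--\eqref{eq2.31} justified by the estimates \eqref{eq2.32}--\eqref{eq2.34}, the effective field \eqref{eq2.35} combined with the one-body result \eqref{eq2.41f} to get \eqref{eq2.38}, the derivation of the systems \eqref{eq2.52f}--\eqref{eq2.39}, the cube partition under \eqref{eq2.40} with the distribution law \eqref{eq2.26} giving \eqref{eq2.42} and the reduced system \eqref{eq2.57f}--\eqref{eq2.58f}, and the Riemann-sum identification of \eqref{eq2.45} with \eqref{eq2.46}. The ``hard part'' you flag at the end (uniform control of the accumulated per-particle errors and solvability of the limiting equation) is in fact treated no more rigorously in the paper, which simply asserts the Riemann-sum convergence, so your account matches the paper's argument in both substance and level of detail.
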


 Apply the operator $\nabla\times\nabla\times$
 to equation \eqref{eq2.46} and use the formulas
  $$\nabla\times\nabla\times=\nabla\nabla\cdot-\nabla^2,\quad \nabla\cdot E=0,\quad  -\nabla^2 g=k^2g+\delta(x-y).$$
Assume for simplicity that $\Gamma$ is a diagonal matrix, $\Gamma :=\gamma I$, and let $C_D:= \frac {c_D}{1+\gamma}$.
Then
\be\label{eq2.61f}
\nabla \times \nabla \times E = k^2 E - C_D\nabla \times \left( N(x) \nabla \times E \right) =
  k^2E - C_D N(x) \nabla \times \nabla \times E -  C_D [\nabla N, \nabla \times E].
\ee
%\end{multline}
Consequently,
\be\label{eq2.62f}
\nabla \times \nabla \times E = \frac{k^2 E}{1 + C_D N(x)} - \frac{ C_D[\nabla N, \nabla \times E]}{1 + C_D N(x)}.
\ee
It is clear from \eqref{eq2.62f} that the refraction coefficient in the medium where many small perfectly conducting particles are
 distributed is changed: the new refraction coefficient is proportional to $\Big(1 + C_D N(x)\Big)^{-1}$. The second term
 on the right-hand side of equation \eqref{eq2.62f} can be interpreted as coming from the new magnetic permeability.
 Indeed, if $\mu=\mu(x)$ in Maxwell equations, then taking $\nabla \times$ of the first equation and using the
 second equation one gets $\nabla \times \nabla \times E=k^2E+ [\frac{\nabla \mu(x)}{\mu(x)}, \nabla \times E]$.
 Compare this formula with equation  \eqref{eq2.62f} and conclude that $\mu(x)=\Big(1 + C_D N(x)\Big)^{-1}$.

Since $\nabla \cdot E=0$, one has $\nabla \times \nabla \times E =\nabla \nabla\cdot E-\nabla^2 E=-\nabla^2 E$,
and since $N(x)\ge 0$ is compactly supported, equation \eqref{eq2.62f} is a Schr\"odinger-type equation
with compactly supported potential and the terms with the first derivatives, the coefficients
in front of which are compactly supported. The solution of this equation satisfies the radiation condition
at infinity.

%Theorem \ref{thm2.2} is proved.
%\end{proof}

\section{Scattering by one impedance particle of an arbitrary shape}\label{sec3}
The problem consists of finding the solution to the system \eqref{eq2.1}, assuming that $E = E_0 + v_E$, $E_0$ is
given in \eqref{eq2.3}, the scattered field $v_E$ satisfies the radiation condition \eqref{eq2.5}, and $E$ satisfies the
impedance boundary condition
\begin{equation}\label{eq3.1}
[N, [E, N]] = \zeta [N, H] \quad \text{ on } S, \quad Re\, \zeta \geq 0,
\end{equation}
where $\zeta$ is a number, the boundary impedance. We will use condition \eqref{eq3.1} in the form
\begin{equation}\label{eq3.2}
[N, [v_E, N]] - \frac{\zeta}{i\omega\mu}[N, \nabla \times v_E] = -f,
\end{equation}
where
\begin{equation}\label{eq3.3}
f := [N, [E_0, N]] - \frac{\zeta}{i\omega\mu}[N, \nabla \times E_0].
\end{equation}
Let us look for the solution of the scattering problem with the impedance boundary condition in the form \eqref{eq2.8} where
$J(t)$ is a tangential field to $S$.

{\em It was not known if this solution can be represented in the form \eqref{eq2.8}. We prove in this Section that
one can find the solution in the form \eqref{eq2.8} and the scattered
field can be found  in the form \eqref{eq3.4}, see below.}

 The uniqueness of the solution to the EM wave scattering problem by an impedance body is known (see a proof in \cite{R635}).
  The existence of the solution in the form involving a sum of four boundary integrals was known (see \cite{CK}),
  but such a representation of the solution is not useful for our purposes. We want to give an explicit closed-form
  formula for the field scattered by a small impedance particle of an arbitrary shape.

The integral equation for $J$, which one gets by substituting
\begin{equation}\label{eq3.4}
v_E = \nabla \times \int_S g(x,t)J(t)dt
\end{equation}
into boundary condition \eqref{eq3.2}, is not of a Fredholm class: it is a singular integral equation.

Our approach to solving the scattering problem for a small impedance particle can be described as follows. We prove that
 its solution exists and can be represented in the form \eqref{eq3.4} by using the general theory of elliptic
 systems (see \cite{S}) and checking that {\em the complementing or covering condition, also known as
  Lopatinsky-Shapiro (LS) condition,} is satisfied (see \cite{R651}).

  Note that if the solution exists, it can be found in the form \eqref{eq3.4}. Indeed, one can calculate $[N,e]$ on $S$,
  and solve the problem for perfectly conducting particle with the boundary condition    $[N,e]$ on $S$. If $[N,e]$ on $S$
  is known, then $e$ is uniquely determined, so the corresponding scattering problem is uniquely solvable
  and its solution, as follows from Theorem 2.1, can be found in the form  \eqref{eq3.4}.

 Next, we prove that asymptotically, as $a \to 0$, the main term in the scattered field is given by the formula
\begin{equation}\label{eq3.5}
v_E = [\nabla g(x,x_1), Q], \qquad a \to 0,
\end{equation}
where $x_1 \in D$ is an arbitrary point inside the small particle $D$, and
\begin{equation}\label{eq3.6}
Q := \int_S J(t)dt.
\end{equation}
{\em This is an important point: not the function $J(t)$ but just the quantity $Q$ defines main term of the scattered
field if the body $D$ is small, $ka \ll 1$. From the physical point of view solving the scattering problem is reduced
 to finding vector $Q$ rather than the vector-function $J(t)$.
From the numerical point of view such a reduction makes it possible to solve scattering problems with so many small
 particles that was
impossible to solve earlier.}

Finally, we give, as $a \to 0$,  a formula for $Q$:
\begin{equation}\label{eq3.7}
Q = - \frac{\zeta |S|}{i\omega\mu}\tau_1 \nabla \times E_0,
\end{equation}
see formula  \eqref{eq3.304} below, where $\tau_1:=(I+\Gamma)^{-1} \tau$, and $\tau$ is defined in formula \eqref{eq3.8}.

In formula \eqref{eq3.7}  $|S|$ is the surface area of $S := \partial D$, $\zeta$ is the boundary impedance
(see condition \eqref{eq3.2}), and the tensor $\tau$ is defined as follows:
\begin{equation}\label{eq3.8}
\tau_{jp} := \delta_{jp} - b_{jp}, \quad b_{jp} := \frac{1}{|S|}\int_S N_j(t)N_p(t)dt.
\end{equation}
 Formulas \eqref{eq3.5}, \eqref{eq3.7} and \eqref{eq3.8} solve the EM wave scattering problem for a small
 impedance body of an arbitrary shape.
It follows from  formula \eqref{eq3.7} that $Q=O(a^{2-\kappa})$ because $|S|=O(a^2)$ and $\zeta=O(a^{-\kappa})$
 as $a\to 0$.

Let us prove these statements. We start with the uniqueness and  existence of the solution of the scattering problem with
the impedance boundary condition.

Uniqueness of this solution is known (see \cite{R635}, p. 81). Let us reduce solving Maxwell's system \eqref{eq2.1} to an
 equivalent elliptic system for $E$. If $E$ is found then $H$ is given by the formula
\begin{equation}\label{eq3.9}
H = \frac{\nabla \times E}{i\omega\mu}.
\end{equation}
Assume that  $\mu=$ const in $D'$.  Apply the operator $\nabla \times$ to the first equation \eqref{eq2.1} and use the second
 equation \eqref{eq2.1} to get:
\begin{equation}\label{eq3.10}
\nabla \times \nabla \times E = k^2E, \quad \nabla \cdot E = 0, \quad \text{ in } D',
\end{equation}
where $k^2 = \omega^2\epsilon\mu$.
Equations \eqref{eq3.10} imply
\begin{equation}\label{eq3.11}
\left( \nabla^2 + k^2 \right)E = 0, \qquad \nabla \cdot E = 0 \quad \text{ in } D'.
\end{equation}
Since $E_0$ solves equations \eqref{eq3.11} in $\mathbb{R}^3$, one concludes that
\begin{equation}\label{eq3.12}
\left( \nabla^2 + k^2 \right)v_E = 0, \qquad \nabla \cdot v_E = 0 \quad \text{ in } D'.
\end{equation}
Equations \eqref{eq3.12} could be replaced by one elliptic system:
\begin{equation}\label{eq3.13}
\left( -\nabla^2 - k^2 \right)v_E = 0 \qquad \text{ in } D',
\end{equation}
and the boundary condition
\begin{equation}\label{eq3.14}
\nabla \cdot v_E = 0 \qquad \text{ on } S.
\end{equation}
Indeed, the function $\psi(x):=\nabla \cdot v_E$ solves the problem
\begin{equation}\label{eq3.15}
\left( \nabla^2 + k^2 \right)\psi = 0 \quad \text{ in } D', \quad \psi|_S = 0,
\end{equation}
and $\psi$ satisfies the radiation condition \eqref{eq2.5}. This implies (see \cite{R190}, p. 28) that $\psi = 0$ in $D'$.

Therefore, our scattering problem is reduced to solving elliptic system \eqref{eq3.13} with boundary conditions
\eqref{eq3.14} and \eqref{eq3.2} and the radiation condition \eqref{eq2.5}.

Let $w(x):= (1+|x|^2)^{-\gamma}$, where $\gamma>\frac 1 2$, be a  weight function. This weight is chosen so that
the functions $v_E$, that are $O(\frac 1{|x|})$ as $|x|\to \infty$, belong to $L^2(D', w)$. By $H^2(D', w)$
the weighted Sobolev space is denoted.

\begin{theorem}\label{thm3.1}
The solution $v_E$ to the elliptic system \eqref{eq3.13} with boundary conditions \eqref{eq3.14} and \eqref{eq3.2}
and the radiation condition \eqref{eq2.5} exists in $H^2(D', w)$, is unique, $v_E=O(\frac 1{|x|})$ as $|x|\to \infty$,
 and $v_E$ can be found of the form \eqref{eq3.4}.
\end{theorem}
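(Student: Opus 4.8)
The plan is to prove uniqueness, existence, decay, and the representation \eqref{eq3.4} as separate items, since uniqueness is already available: I would simply quote the uniqueness of the impedance scattering problem from \cite{R635}. The real content is existence together with the fact that the solution, \emph{a priori} known only to lie in $H^2(D',w)$, can be written in the single-curl form \eqref{eq3.4}. Because the integral equation obtained by inserting \eqref{eq3.4} into \eqref{eq3.2} is singular (not Fredholm), I would not try to produce $J$ directly; instead I would establish existence abstractly through the theory of elliptic systems and then recover the representation \eqref{eq3.4} afterwards.

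For existence I would view \eqref{eq3.13}--\eqref{eq3.14}--\eqref{eq3.2} as a Douglis--Nirenberg elliptic boundary value problem for the field $v_E$. The interior operator is $-\nabla^2-k^2$ acting componentwise; its principal part $-\nabla^2\,I$ has symbol $|\xi|^2I$, so the system is properly elliptic and the only obstruction to the Fredholm property is the complementing (Lopatinsky--Shapiro) condition for the three boundary operators
\[
B_1 v := \nabla\cdot v, \qquad (B_2v,B_3v):=[N,[v,N]]-\frac{\zeta}{i\omega\mu}[N,\nabla\times v].
\]
To check it I would freeze coefficients at $s\in S$, take local coordinates with $x_3$ along $N$ (so $D'$ is the half-space $x_3>0$), apply the tangential Fourier transform with covariable $\xi'=(\xi_1,\xi_2)\neq 0$, and note that the bounded solutions of the model equation $(-\partial_{x_3}^2+|\xi'|^2)\hat v=0$ are exactly $\hat v(x_3)=Ve^{-|\xi'|x_3}$ with $V\in\mathbb{C}^3$. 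Substituting $\partial_{x_3}\mapsto-|\xi'|$ into the curl and divergence and evaluating the three boundary operators at $x_3=0$ reduces the complementing condition to the injectivity in $V$ of the resulting $3\times 3$ boundary symbol.

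The delicate point, and the step I expect to be the main obstacle, is that the \emph{top-order} part of this boundary symbol is degenerate: keeping only the order-one curl term in the impedance condition leaves the normal amplitude $V_3$ undetermined, so the naive principal-symbol check fails. The resolution, which is exactly what the machinery of \cite{R651} is designed for, is to retain the full mixed-order boundary symbol, including the order-zero term $[N,[v,N]]$ of the impedance condition. Eliminating $V_1,V_2$ in terms of $V_3$ from $B_2,B_3$ and substituting into $B_1$, I expect the requirement $V=0$ to reduce, after the cancellation coming from $\xi_1^2+\xi_2^2=|\xi'|^2$, to the nonvanishing of a single scalar of the form $|\xi'|/(1-\beta|\xi'|)$ with $\beta=\zeta/(i\omega\mu)$; the sign condition $\mathrm{Re}\,\zeta\ge 0$ is precisely what keeps the resonant denominator $1-\beta|\xi'|$ from vanishing for $\xi'\neq 0$, so the complementing condition holds. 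With proper ellipticity and the Lopatinsky--Shapiro condition in hand, the general theory (\cite{S}, \cite{R651}) yields the elliptic a priori estimate and the Fredholm property in $H^2(D',w)$; the weight $w$ together with the radiation condition \eqref{eq2.5} controls the behavior at infinity and forces $v_E=O(1/|x|)$, while the already-quoted uniqueness gives a trivial kernel, so the problem is solvable and existence follows.

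Finally, for the representation \eqref{eq3.4} I would argue as in the paragraph following \eqref{eq3.4}: with $v_E$ now known to exist, compute its tangential trace $[N,v_E]$ on $S$ and solve the perfectly conducting problem of Section \ref{sec2.1} with this tangential datum. Theorem \ref{thm2.1} provides a tangential $J\in C(S)$ for which $\nabla\times\int_S g(x,t)J(t)\,dt$ has the same tangential trace, and since an exterior solution of \eqref{eq3.12} satisfying the radiation condition is uniquely determined by its tangential component on $S$ (the fact recorded near \eqref{eq2.192}), the two fields coincide, giving $v_E=\nabla\times\int_S g(x,t)J(t)\,dt$. The decay $v_E=O(1/|x|)$ is then also transparent from this formula, since $\nabla_x g(x,t)=O(1/|x|)$ uniformly for $t\in S$.
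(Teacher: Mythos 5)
Your overall architecture coincides with the paper's: uniqueness is quoted from \cite{R635}; existence is obtained from the theory of elliptic systems by flattening the boundary, taking the tangential Fourier transform, substituting the decaying solutions $e^{-\rho t}$, and checking the complementing (Lopatinsky--Shapiro) condition; the Fredholm property is then used in the weighted spaces $H^m(D',w)$; and the representation \eqref{eq3.4} is recovered exactly as in the paper's final paragraph, by solving the perfectly conducting problem of Theorem \ref{thm2.1} with the tangential datum of $v_E$ and invoking the uniqueness of a radiating exterior field with prescribed tangential component.

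The genuine gap is in your LS verification, which is also the one place where you depart from the paper. The paper checks the condition at the level of principal symbols only: the zero-order term $[N,[v,N]]$ of \eqref{eq3.2} never enters, the boundary symbol is the matrix \eqref{eq3.22}, and its determinant \eqref{eq3.22a} is claimed nonzero for all $\rho>0$ with no constraint on $\zeta$ at all (beyond $\zeta\neq 0$ as an overall factor). You instead declare the principal check degenerate and repair it by retaining the full mixed-order boundary symbol, so that your entire existence proof hinges on the nonvanishing of $1-\beta|\xi'|$ with $\beta=\zeta/(i\omega\mu)$. That step fails concretely: $\mathrm{Re}\,\zeta\ge 0$, which is the only hypothesis imposed in \eqref{eq3.1}, does \emph{not} keep $1-\beta|\xi'|$ from vanishing. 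Take $\zeta=is$ with $s>0$, a purely reactive impedance admitted by the hypothesis; then $\beta=s/(\omega\mu)$ is real and positive, the denominator vanishes at $|\xi'|=\omega\mu/s$, and at that frequency your own elimination yields a nontrivial kernel ($V_3=0$, $V'\perp\xi'$). So under the theorem's stated assumptions your nondegeneracy mechanism breaks down exactly where it is needed. Two further points. (i) The complementing condition in the theory you cite (\cite{S}) is a condition on principal symbols and is homogeneous in $\xi'$; your criterion singles out the specific frequency $|\xi'|=1/\beta$, hence is not scale invariant and cannot serve as the LS condition — and without LS in the standard sense you are not entitled to the a priori estimate \eqref{eq3.23} on which everything downstream rests. (ii) Even granting Fredholmness, ``Fredholm plus trivial kernel implies solvable'' requires the index to be zero; the paper asserts and argues index zero as a separate step, which your proposal omits. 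I add, to be fair, that your degeneracy worry is not baseless: the curl-free, divergence-free decaying field $\nabla\big(e^{i\xi'\cdot x'-\rho x_3}\big)$ shows that the curl rows and the divergence row interact delicately, and the passage from \eqref{eq3.19} to \eqref{eq3.22} via $\xi_3\mapsto\rho$ merits scrutiny, since $D_t e^{-t\rho}=i\rho\, e^{-t\rho}$ suggests the substitution $\xi_3\mapsto i\rho$; but your proposed repair, as written, neither reproduces the paper's argument nor survives the purely imaginary impedances allowed by $\mathrm{Re}\,\zeta\ge 0$.
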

\begin{proof}
Clearly, system \eqref{eq3.13} is elliptic. Let us check that the LS (complementary) condition is satisfied.
The principal symbol of the operator \eqref{eq3.13} is $\xi^2 \delta_{pq}$, where $\xi$ is the parameter of the Fourier transform:
\begin{equation}\label{eq3.16}
u(x) = \int_{\mathbb{R}^3}\tilde{u}(\xi)e^{i\xi\cdot x}d\xi.
\end{equation}
If $D_j := -i\frac{\partial}{\partial x_j}$, then equation \eqref{eq3.13} can be rewritten as follows:
\begin{equation}\label{eq3.17}
\sum_{j = 1}^3 D_j^2v_E - k^2v_E = 0 \quad \text{ in } D'.
\end{equation}
The boundary conditions \eqref{eq3.2} and \eqref{eq3.14} can be written in the form
\begin{equation}\label{eq3.18}
B(D)v_E = F, \qquad F:=\left(\begin{array} {c}
f\\0
\end{array}\right),
\end{equation}
where $D:= (D_1, D_2, D_3)$, $f$ is a two-dimensional vector in the tangential to $S$ plane in the local coordinates and the
zero component in the vector $F$ in formula  \eqref{eq3.18} comes from the condition $\nabla \cdot v_E=0$ on $S$.
 The matrix $B(D)$ is defined by one vector boundary condition \eqref{eq3.2} and one scalar boundary condition
 \eqref{eq3.14}. In the local coordinates on $S$, in which the exterior unit normal $N$ to $S$ is directed along
  the $z-$axis, one has  $N = (0,0,1)$, and
 the principal symbol of the boundary matrix differential operator $B(D)$ is:
\begin{equation}\label{eq3.19}
B(\xi) := \frac{\zeta}{i\omega\mu} \left( \begin{array}{ccc}
-i\xi_3 & 0 & i\xi_1 \\
0 & -i\xi_3 & i\xi_2 \\
i\xi_1 & i\xi_2 & i\xi_3 \end{array} \right).
\end{equation}
The operator $D_j$ is mapped by the Fourier transform \eqref{eq3.16} onto $\xi_j$.

Let  $D_t := D_3$. The LS condition holds if the following problem
\begin{equation}\label{eq3.20}
\left( -\frac{d^2}{dt^2} + \rho^2 \right)u(\xi_1, \xi_2, t) = 0, \quad t > 0, \quad \rho^2 := \xi_1^2 + \xi_2^2,
\end{equation}
\begin{equation}\label{eq3.21}
B(\xi_1, \xi_2, D_t)u(\xi_1, \xi_2, t)|_{t=0} = 0,
\end{equation}
has only the zero solution, provided that one uses exponentially decreasing, as $t \to \infty$, solution of
equation \eqref{eq3.20}, that is, $u = e^{-t\rho}v$, $v = v(\xi_1, \xi_2) = (v_1, v_2, v_3)$, see \cite{S}.

Therefore, the LS condition holds if and only if the matrix
\begin{equation}\label{eq3.22}
\left( \begin{array}{ccc}
-i\rho & 0 & i\xi_1 \\
0 & -i\rho & i\xi_2 \\
i\xi_1 & i\xi_2 & i\rho \end{array} \right)
\end{equation}
is non-degenerate for $\rho > 0$. The determinant of this matrix equals to
\begin{equation}\label{eq3.22a}
-i\rho(\rho^2 + \xi_1^2 + \xi_2^2) \neq 0 \qquad \text{\,\, if\,\,} \rho > 0.
\end{equation}
 Thus, the LS condition holds. This implies the Fredholm property of the corresponding problem in the spaces
 $H^m(D', w)$ where $w = (1 + |x|^2)^{-\gamma}$, $\gamma > \frac{1}{2}$, that is, in the weighted Sobolev spaces with the norm
  $||v||_m^2 := \int_{D'}\sum_{l = 0}^m|D^l v|^2 w(x)dx$.
  The weight $w$ is chosen so that the functions decaying as $O(|x|^{-1})$ at infinity belong to $H^m(D',w)$.
 Since the LS condition holds,
 the elliptic estimate holds for the solution to problem \eqref{eq3.13}, \eqref{eq3.14}, \eqref{eq3.2}, \eqref{eq2.5}:
\begin{multline}\label{eq3.23}
||v_E||_{m + 2} \leq c\left( ||(\nabla^2 + k^2)v_E||_m + |B(D)v_E|_{m + \frac{1}{2}} + ||\eta v_E||_0 \right) \\
\leq c\left( |f|_{m + \frac{1}{2}} + ||\eta v_E||_0 \right),
\end{multline}
where $\eta$ is a smooth non-negative cut-off function vanishing near infinity,  $||v||_m$ is the norm in
$H^m(D', w)$ and $|v|_m$ is the norm in the
Sobolev space $H^m(S)$ on the boundary $S$,  see \cite{S}.

Due to the uniqueness of the solution to the scattering problem one can reduce estimate  \eqref{eq3.23} to the following
estimate:
\begin{equation}\label{eq3.23'}
  ||v_E||_{m+2}\le c|f|_{m+\frac {1} {2}},
\end{equation}
where $c>0$ here and below denotes various estimation constants.
To prove estimate  \eqref{eq3.23'} assume that it is false and derive a contradiction. If estimate \eqref{eq3.23'} is false,
then there is a sequence $v_{En}$, $||v_{En}||_{m+2}=1$,
such that
\begin{equation}\label{eq3.23a}
||v_{En}||_{m+2}\ge n |f_n|_{m+\frac {1} {2}}.
\end{equation}
Therefore, in any compact subdomain $D"$ of $D'$ one can select
a convergent in $H^{l}(D")$, $l<m$, subsequence which we denote again $v_{En}$. Assume  for concreteness that $m=0$.
Then, by the Sobolev embedding
theorem, $v_{En}$ converges strongly in $H^l(D")$ for $l<2$. Estimate  \eqref{eq3.23} implies that
$$||v_{Ej}-v_{Em}||_{H^2(D")}\le c(|f_j-f_m|_{1/2}+||\eta (v_{Ej}-v_{Em})||_{H^0(D")}\to 0 \text{\,\, as \,\,} j,m\to \infty.$$
Thus, $v_{En}$ converges in $H^2(D")$ to some element $v$, $||v||_2=1$. It follows from
estimate  \eqref{eq3.23a} and from the relation  $||v_{En}||_{2}=1$ that $|f_n|_{\frac {1} {2}}\to 0$ as $n\to \infty$.
Let us check that  $v$ satisfies the radiation condition. This is done as follows. Denote $v_{En}:=v_n$ and write the Green's formula:
\begin{equation}\label{eq3.23"}
v_n(x)=\int_S\Big(\frac{\partial g(x,s)}{\partial N}v_n(s)-g(x,s)\frac {\partial v_n}{\partial N} \Big)ds.
\end{equation}
Pass to the limit $n\to \infty$ in this formula. This is possible since, by the Sobolev embedding theorem,
 the embedding of $H^l(D")$ into $H^1(S)$
 is compact if $l>\frac 3 2$ provided that $D"\subset \mathbb{R}^3$, see \cite{KA}.
   Due to the local convergence in $H^l(D")$, $\frac {3}{2}<l<2$, one can pass to the limit $n\to \infty$
   in equation \eqref{eq3.23"} and get
  \begin{equation}\label{eq3.23b}
v(x)=\int_S\Big(\frac{\partial g(x,s)}{\partial N}v(s)-g(x,s)\frac {\partial v}{\partial N} \Big)ds.
\end{equation}
This implies that $v$ satisfies the radiation condition.

Therefore $v$  solves the homogeneous scattering problem and, by the uniqueness of the solution to this
problem, $v=0$. This contradicts the normalization  $||v||_2=1$, and the contradiction proves estimate   \eqref{eq3.23'}.

 {\em The index of our problem is zero}.

 This follows from the uniqueness of the solution to the homogeneous version of the scattering
 problem \eqref{eq3.13}, \eqref{eq3.14}, \eqref{eq3.2}, \eqref{eq2.5}, see also Lemma 2.1.

 Equation \eqref{eq3.2} can be written as
 \begin{equation}\label{eq3.2'}
 v_{E\tau}=\frac {\zeta}{i\omega \mu}V([N, \nabla \times v_{E\tau}])-V(f),
 \end{equation}
 where the operator $V$ was introduced in formula \eqref{eq2.192}, and $ v_{E\tau}$ is the tangential component of $v_E$.
 Let us assume that $f\in H^m(S)$. If $v_{E\tau}\in H^m(S)$,
 then $\nabla \times v_{E\tau}\in H^{m-1}(S)$. Therefore, it follows from  equation \eqref{eq3.2'} that
 $V([N, \nabla \times v_{E\tau}])\in  H^m(S)$. This means that $V$ acts from $ H^{m-1}(S)$ into $ H^{m}(S)$. Since the embedding
from  $ H^{m}(S)$ into  $ H^{m-1}(S)$ is compact,  $V$ is compact in  $ H^{m}(S)$.

  We have proved the existence of the unique solution to problem \eqref{eq3.13}, \eqref{eq3.14}, \eqref{eq3.2}, \eqref{eq2.5}.
  This problem is equivalent to the scattering problem \eqref{eq2.1}, \eqref{eq3.1}, \eqref{eq2.3'}, \eqref{eq2.5}.

Let us prove that if a solution to this scattering problem exists, then the scattered field $v_E$ can
be represented in the form \eqref{eq3.4}.

Let $E$ solve problem \eqref{eq2.1}, \eqref{eq3.1}, \eqref{eq2.3'}, and \eqref{eq2.5}. The tangential component
 $[N, [E, N]]$ on $S$ determines uniquely $E$ in $D'$. There is a one-to-one correspondence between $E$ and $v_E$,
  where $v_E = E - E_0$, and $v_E$ satisfies the boundary condition \eqref{eq3.2} with $f$ defined in \eqref{eq3.3}.
  The $v_E$ of the form \eqref{eq3.4} can be found from equation of the type \eqref{eq2.12}.
 Theorem \ref{thm2.1} guarantees that this equation is solvable for $J$ and the solution is unique.
  The corresponding $v_E$, defined by formula \eqref{eq3.4}, is the scattered field, and $E = E_0 + v_E$ is
  the unique solution to the scattering problem \eqref{eq2.1}, \eqref{eq3.1}, \eqref{eq2.3'}, \eqref{eq2.5}.

Theorem \ref{thm3.1} is proved.
\end{proof}

\begin{corollary}\label{cor3.1}
The smoothness of $v_E$ is $\frac{3}{2}$ derivatives more than the smoothness of the data $f$, as follows from the
 estimate \eqref{eq3.23}.
\end{corollary}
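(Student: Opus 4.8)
The plan is to read the conclusion directly off the coercive elliptic estimate already established in the proof of Theorem~\ref{thm3.1}. Because the Lopatinsky-Shapiro condition holds -- the determinant in \eqref{eq3.22a} is non-zero for $\rho>0$ -- the boundary value problem \eqref{eq3.13}, \eqref{eq3.14}, \eqref{eq3.2}, \eqref{eq2.5} obeys the Agmon-Douglis-Nirenberg estimate \eqref{eq3.23}, and, after discarding the lower-order term $\|\eta v_E\|_0$ by the uniqueness/compactness argument, the sharpened form \eqref{eq3.23'},
\begin{equation*}
\|v_E\|_{m+2}\le c\,|f|_{m+\frac12}.
\end{equation*}
Here $\|\cdot\|_{m+2}$ is the norm of $H^{m+2}(D',w)$, a space of functions on the three-dimensional exterior region $D'$, whereas $|\cdot|_{m+\frac12}$ is the norm of $H^{m+\frac12}(S)$ on the two-dimensional boundary $S$. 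The whole content of the corollary is the comparison of these two indices.

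First I would fix ``smoothness'' to mean the Sobolev order. The datum $f$ has order $m+\tfrac12$, and \eqref{eq3.23'} produces a solution $v_E$ of order $m+2$; subtracting,
\begin{equation*}
(m+2)-\Big(m+\tfrac12\Big)=\tfrac32,
\end{equation*}
which is precisely the assertion. The half-integer is forced by the structure of the problem: the boundary operator $B(D)$ in \eqref{eq3.18}--\eqref{eq3.19} is first order, so the trace on $S$ of $B(D)v_E$ for $v_E\in H^{m+2}(D')$ lands in $H^{m+\frac12}(S)$ by the trace theorem, and the coercive estimate asserts that this boundary map is, modulo the uniqueness reduction, invertible with exactly this gain.

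Finally I would observe that \eqref{eq3.23'} is available at every admissible level $m\ge 0$, not only at $m=0$ where the contradiction argument was spelled out, because the LS condition is an algebraic condition on the principal symbols and is therefore level-independent; the coercive estimate \eqref{eq3.23} consequently holds on the whole Sobolev scale. Starting from the solution $v_E\in H^2(D',w)$ furnished by Theorem~\ref{thm3.1}, a standard bootstrap upgrades its regularity: if $f\in H^{m+\frac12}(S)$ then the level-$m$ estimate gives $v_E\in H^{m+2}(D')$, since $\|\eta v_E\|_0$ is already finite. The one point needing care -- and the only mild obstacle -- is precisely this uniformity in $m$: one must check that both the ADN estimate and the passage from \eqref{eq3.23} to \eqref{eq3.23'} go through verbatim at higher $m$, which they do once $S$ is assumed smooth enough (the $C^2$ hypothesis used for the basic theory must be strengthened as $m$ grows). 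Granting this, the corollary is immediate.
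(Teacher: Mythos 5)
Your proposal is correct and follows essentially the same route as the paper, which offers no argument beyond reading the index gap $(m+2)-(m+\tfrac12)=\tfrac32$ off the elliptic estimate \eqref{eq3.23} (in its sharpened form \eqref{eq3.23'}). Your added remarks on the $m$-independence of the Lopatinsky--Shapiro condition and the bootstrap to higher levels (with correspondingly stronger smoothness of $S$) are a sound elaboration of what the paper leaves implicit, not a different method.
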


\begin{lemma}\label{lem3.1}
Formula \eqref{eq3.5} is asymptotically exact.
\end{lemma}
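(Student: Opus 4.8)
The plan is to start from the exact representation \eqref{eq3.4} and peel off the claimed main term by a Taylor expansion of the kernel about $x_1$. Since $J(t)$ does not depend on the field variable $x$, the identity $\nabla\times(g(x,t)J(t))=[\nabla_x g(x,t),J(t)]$ gives the exact formula
\be
v_E=\int_S [\nabla_x g(x,t),J(t)]\,dt.
\ee
Because $\mathrm{diam}\,D=2a$ and $x_1\in D$, one has $|t-x_1|\le 2a$ for every $t\in S$, so I would write $\nabla_x g(x,t)=\nabla_x g(x,x_1)+\big(\nabla_x g(x,t)-\nabla_x g(x,x_1)\big)$ and integrate. The first piece reproduces exactly the main term $[\nabla g(x,x_1),Q]$ with $Q=\int_S J\,dt$ as in \eqref{eq3.6}; everything then reduces to showing that the remainder
\be
R:=\int_S[\nabla_x g(x,t)-\nabla_x g(x,x_1),J(t)]\,dt
\ee
is negligible compared with $[\nabla g(x,x_1),Q]$ as $a\to 0$, uniformly for $x$ at a fixed positive distance from $D$ (the regime relevant to the scattering and many-body problems).

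Next I would estimate $R$ by the mean value theorem in the second argument of the kernel. Writing $r:=|x-x_1|$ and using $|\nabla_x g(x,x_1)|=O\big((k+r^{-1})r^{-1}\big)$ together with $|\nabla_x\nabla_\xi g(x,\xi)|=O\big((k+r^{-1})^2 r^{-1}\big)$ for $\xi$ on the segment from $x_1$ to $t$, one gets
\be
|R|\le O(a)\,\sup_{\xi}|\nabla_x\nabla_\xi g(x,\xi)|\int_S|J(t)|\,dt,
\ee
and hence the relative error obeys
\be
\frac{|R|}{\big|[\nabla g(x,x_1),Q]\big|}=O\big(a(k+r^{-1})\big)\,\frac{\int_S|J|\,dt}{|Q|}.
\ee
Since $r$ stays bounded away from $0$ (indeed $r\ge d\gg a$, while $k$ is fixed), the prefactor $a(k+r^{-1})\to 0$; it therefore remains only to show that $\int_S|J|\,dt$ does not exceed $|Q|$ in order. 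Note that, in contrast to the quantity $I$ in the perfectly conducting case, here the field point $x$ is separated from $S$, so no internal singularity of the kernel appears and the expansion is clean.

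The crux, and the step I expect to be the main obstacle, is the bound $\int_S|J|\,dt=O(|Q|)=O(a^{2-\kappa})$, i.e.\ that $J$ carries no cancellation strong enough to make $\|J\|_{L^1(S)}$ dominate $|\int_S J|$. Here I would use Cauchy--Schwarz, $\int_S|J|\,dt\le|S|^{1/2}\|J\|_{L^2(S)}$ with $|S|=O(a^2)$, and control $\|J\|_{L^2(S)}$ through the boundary equation for $J$ obtained by inserting \eqref{eq3.4} into \eqref{eq3.2}. Its right-hand side is $f$ from \eqref{eq3.3}, whose dominant part is $-\frac{\zeta}{i\omega\mu}[N,\nabla\times E_0]=O(a^{-\kappa})$, so $\|f\|_{L^2(S)}=O(a^{-\kappa})|S|^{1/2}=O(a^{1-\kappa})$. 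Using uniqueness and the elliptic estimate \eqref{eq3.23'} from Theorem \ref{thm3.1}, which furnishes an inverse of the boundary operator bounded uniformly in $a$ despite the equation being singular rather than Fredholm, I would deduce $\|J\|_{L^2(S)}=O(a^{1-\kappa})$ and therefore $\int_S|J|\,dt=O(a^{2-\kappa})$. Combined with $|Q|=O(a^{2-\kappa})$ from \eqref{eq3.7}, the ratio $\int_S|J|/|Q|$ is bounded, so the relative error is $O(a(k+r^{-1}))\to 0$, proving that \eqref{eq3.5} is asymptotically exact. The delicate point throughout is transferring the elliptic estimate for $v_E$ into a uniform $L^2$-bound on the density $J$; the singular (non-Fredholm) character of the $J$-equation is precisely why this bound must be routed through the elliptic theory of Theorem \ref{thm3.1} rather than through a naive Neumann-series inversion.
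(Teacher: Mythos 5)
Your proposal takes essentially the same route as the paper's own proof: your exact splitting of the kernel about $x_1$ is precisely \eqref{eq3.24}, your bounds on the main term and on the remainder reproduce \eqref{eq3.26}--\eqref{eq3.27}, and you arrive at the same relative error $O\left(ka+\frac{a}{d}\right)$ as in \eqref{eq3.28}. The only addition is your explicit justification of $\int_S|J(t)|\,dt=O(|Q|)$ via Cauchy--Schwarz and the elliptic estimate \eqref{eq3.23'}, a step the paper leaves implicit behind the remark that ``the quantity $Q$ does not vanish'' (together with \eqref{eq3.7}); this strengthens rather than departs from the paper's argument, with the caveat that the uniformity in $a$ of the constant in \eqref{eq3.23'}, which you invoke, is asserted but not proved in the paper either.
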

\begin{proof}
The proof is similar to the proof of formula \eqref{eq2.31}. Namely, one has
\begin{equation}\label{eq3.24}
E = E_0 + [\nabla g(x, x_1), Q] + \nabla \times \int_S\left( g(x,x_1) - g(x,t) \right)J(t)dt,
\end{equation}
where $x_1 \in D$ is an arbitrary point and
\begin{equation}\label{eq3.25}
Q := \int_S J(t)dt.
\end{equation}
Note that $g(x,x_1)=O(\frac 1 d)$, where $d := |x - x_1|$. When one differentiate $g$ one gets
\begin{equation}\label{eq3.26}
|\nabla g(x, x_1)| = O\Big( \frac 1 {d} (k + \frac{1}{d}) \Big), \qquad d := |x - x_1|.
\end{equation}
\begin{equation}\label{eq3.27}
\left| \nabla\left( g(x,x_1) - g(x,t) \right) \right| = O\left( (\frac{k}{d} + \frac{1}{d^2}) a (k+ \frac{1}{d}) \right),
\quad a = |x_1 - t|\ll d\ll k^{-1}.
\end{equation}
The quantity $Q$ does not vanish. Thus, the ratio of the third to the second term on the right-hand side of equation
 \eqref{eq3.24} is of the order
\begin{equation}\label{eq3.28}
O\left( ka + \frac{a}{d} \right) \ll 1.
\end{equation}
Lemma \ref{lem3.1} is proved.
\end{proof}

\begin{corollary}\label{cor3.2}
Formula \eqref{eq3.5} shows that solving the scattering problem by a small body ($ka \ll 1$) amounts to finding one quantity $Q$
 rather than the function $J(t)$ on $S$.
\end{corollary}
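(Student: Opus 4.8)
The plan is to derive the corollary directly from Lemma~\ref{lem3.1}, since the statement is an interpretation of the asymptotic exactness of formula~\eqref{eq3.5} rather than a new computation. First I would recall the exact representation~\eqref{eq3.24},
\[
E = E_0 + [\nabla g(x,x_1),Q] + \nabla\times\int_S\big(g(x,x_1)-g(x,t)\big)J(t)\,dt,
\]
together with the estimates~\eqref{eq3.26}--\eqref{eq3.28} proved in that lemma, which show that the last term is of relative order $O(ka+\frac{a}{d})\ll 1$ compared with the principal term $[\nabla g(x,x_1),Q]$. Hence, up to a remainder negligible as $a\to 0$, the scattered field $v_E$ is given by the single-vector expression~\eqref{eq3.5}, in which the only dependence on the surface density $J$ enters through the constant vector $Q=\int_S J(t)\,dt$ defined in~\eqref{eq3.6}. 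This is the crux: the infinite-dimensional object $J$ is needed only through its integral.

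The second step is to observe that the full electromagnetic field is recovered from $Q$ alone. Indeed, $E=E_0+v_E$ with $E_0$ the known incident field from~\eqref{eq2.3}, and $H=(\nabla\times E)/(i\omega\mu)$ by~\eqref{eq2.7}; since the Green function $g(x,x_1)$ is explicit, once $Q\in\mathbb{C}^3$ is known both $E$ and $H$ are determined in $D'$ to leading order as $a\to 0$. No further information about $J(t)$ is required. I would then make the dimensional contrast explicit: $J(t)$ is a tangential vector field on the two-dimensional surface $S$ and thus carries infinitely many degrees of freedom, whereas $Q$ has only three scalar components. This is precisely the asserted reduction, and it is what makes the later problem of \emph{computing} $Q$ (carried out in formula~\eqref{eq3.7}) tractable.

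The only point requiring care — and the closest thing to an obstacle — is that the reduction is \emph{asymptotic}, not exact: it is the smallness $ka\ll 1$, entering through the estimate~\eqref{eq3.27}, that collapses the dependence on $J$ to the dependence on $Q$. I would note that the same estimate shows the answer is insensitive, at leading order, to the choice of the interior point $x_1\in D$, which is why $x_1$ may be taken arbitrary in~\eqref{eq3.5}. Beyond flagging this asymptotic character, the corollary needs no separate argument: it is an immediate reading of Lemma~\ref{lem3.1}.
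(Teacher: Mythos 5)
Your proposal is correct and takes essentially the same route as the paper: the corollary is stated there as an immediate consequence of Lemma~\ref{lem3.1}, read off from the decomposition~\eqref{eq3.24} and the estimates~\eqref{eq3.26}--\eqref{eq3.28}, exactly as you argue, with the scattered field depending on $J$ only through $Q=\int_S J(t)\,dt$. Your added remarks on the asymptotic (rather than exact) character of the reduction and the insensitivity to the choice of $x_1\in D$ are consistent with what the paper establishes and require no further argument.
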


This is crucial for the solution of the many-body scattering problem that we present in Section \ref{sec4}.

{\bf Lemma 3.4.} {\em
Formula \eqref{eq3.7} holds as $a \to 0$.}
%\end{theorem}
%\begin{proof}

{\em Proof.} Proof of Lemma 3.4 is based on the following idea: we
take the vector product of $N_s$ with equation  \eqref{eq3.2}, then integrate the resulting equation over $S$ and keep the
 main term as $a \to 0$.
If
\begin{equation}\label{eq3.29}
\zeta = \frac{h}{a^\kappa}, \quad \kappa \in [0, 1), \quad Re\,\, h \geq 0,
\end{equation}
then one obtains
\begin{equation}\label{eq3.30}
Q = O\left( a^{2 - \kappa} \right), \quad a \to 0.
\end{equation}

Theorem \ref{thm3.1}  gives a mathematical justification of the
smoothness of $v_E$ and, therefore, of $J(t)$ provided that the data are smooth, see Corollary \ref{cor3.1}.
 This result is important for mathematical justification of the boundedness of the
second derivatives of the function $J(t)$, which is assumed but not  justified on p. 91 in \cite{R635}.
The estimates, necessary for a justification of formula  \eqref{eq3.7} are given on pp.88-93 in \cite{R635}.
The term $\int_Sds \int_S dt \nabla_s g(s,t) N(s)\cdot J(t)$ was neglected in \cite{R635}.
This term depends on a vector whose components are
$\int_S \Gamma_{pq}(t)J_q(t)dt$. Here and below over the repeated indices summation is understood
and  $\Gamma_{pq}(t):=\int_S \frac {\partial{g(s,t)}}{\partial{s_p}}N_q(s)ds$,
where the integral is understood as a singular integral.

 If one takes into account the term
\begin{equation}\label{eq3.301}
\int_Sds \int_S dt \nabla_s g(s,t) N(s)\cdot J(t)=e_p\int_S \Gamma_{pq}(t)J_q(t)dt,
\end{equation}
where $\{e_p\}_{p=1}^3$ is an orthonormal basis of $\mathbb{R}^3$, then in place of
equation  \eqref{eq3.7} one obtains the following equations:
\begin{equation}\label{eq3.302}
 \int_S J_p(t)dt +\int_S  \Gamma_{pq}(t)J_q(t)dt=- \frac{\zeta |S|}{i\omega \mu}(\tau \nabla \times E_0, e_p) , \qquad 1\le p\le 3.
 \end{equation}
There exists a constant matrix  $\Gamma:=(\Gamma_{pq})$ such that
\begin{equation}\label{eq3.303}
e_p\int_S  \Gamma_{pq}(t)J_q(t)dt=\Gamma Q,
 \end{equation}
provided that $Q\neq 0$, which is our case. Equation \eqref{eq3.302} in this case
takes the form $(I+\Gamma)Q=- \frac{\zeta |S|}{i\omega \mu}\tau  \nabla \times E_0$, and
the matrix $I+\Gamma$ is non-singular since $Q\neq 0$.  Therefore,
\begin{equation}\label{eq3.304}
Q=- \frac{\zeta |S|}{i\omega \mu} (I+\Gamma)^{-1}\tau  \nabla \times E_0.
 \end{equation}
Lemma 3.4 is proved.  \hfill $\Box$

 The many-body scattering problem
is discussed in the next Section on the basis of formula   \eqref{eq3.7}.
This is done for simplicity of notations, since formula \eqref{eq3.304}
can be identified with formula \eqref{eq3.7} if one replaces $\tau$ by
 $\tau_1:=(I+\Gamma)^{-1}\tau$.

%Lemma 3.4 is proved. \hfill $\Box$
%\end{proof}

\section{Many-body scattering problem}\label{sec4}
This problem consists of finding $E$ and $H = \displaystyle\frac{\nabla \times E}{i\omega\mu}$, which satisfy equations
 \eqref{eq2.1} with $D = \cup_{m=1}^M D_m \subset \Omega$, $E$ is of the form \eqref{eq2.3'} and satisfies the impedance boundary
  conditions on $S_m = \partial D_m$:
\begin{equation}\label{eq4.1}
[N, [E, N]] = \frac{\zeta_m}{i\omega\mu}[N, \nabla \times E] \quad \text{ on } S_m; \quad Re\,\,\zeta_m \geq 0,
\end{equation}
and the radiation condition \eqref{eq2.5} for the scattered field $v_E$. We look for $v_E$ of the form
\begin{equation}\label{eq4.2}
v_E = \sum_{m = 1}^M \nabla \times \int_{S_m}g(x,t)J_m(t)dt, \quad E = E_0 + v_E,
\end{equation}
where $J_m$ is a tangential to $S_m$ field.

{\em The basic physical (and mathematical) assumptions are \eqref{eq1.1} and \eqref{eq1.2}.}

{\em The basic results of this section can be described as follows:}
\begin{itemize}
\item[1.] The above EM wave scattering problem has a solution, this solution is unique and can be found in the form \eqref{eq4.2}.
\item[2.] As $a \to 0$, the main term of the solution to the EM wave scattering problem is
\begin{equation}\label{eq4.3}
E = E_0 + \sum_{m = 1}^M\left[\nabla g(x,x_m), Q_m \right], \quad a \to 0; \quad Q_m := \int_{S_m}J_m(t)dt,
\end{equation}
where $x_m \in D_m$ are arbitrary points.
\item[3.] An explicit, asymptotically exact as $a \to 0$, formula for $Q_m$ is derived:
\begin{equation}\label{eq4.4}
Q_m = - \frac{\zeta_m|S_m|}{i\omega\mu}\tau_m(\nabla \times E_e)(x_m), \quad 1 \leq m \leq M,
\end{equation}
where $|S_m|$ is the surface area of $S_m$, $\zeta_m = \frac{h(x_m)}{a^\kappa}$, $Re\,\, h \geq 0$, where $h \in C(\Omega)$
 is a function the experimenter may choose as desired as well as the parameter $\kappa$, $\kappa \in [0,1)$, $\tau_m$ is the
 tensor defined by formula \eqref{eq3.8} with $S = S_m$, and $E_e(x)$ is {\em the effective field acting on the particle
  $D_m$}:\\
\begin{equation}\label{eq4.5}
E_e(x) := E_0(x) + \sum_{p \neq m}^M \nabla\times\int_{S_p}g(x,t)J_p(t)dt.
\end{equation}
Equation \eqref{eq4.5} is valid  not only in a neighborhood of $x_m$.
The field scattered by $m-$th particle is proportional to $a^{2-\kappa}$ and is negligible compared with $E_e(x)$
at any point $x$.

\item[4.] Derivation of a linear algebraic system (LAS) for calculating $Q_m$.
\item[5.] Proof of the existence of the limit $E(x)$ of the effective field $E_e(x)$ as $a \to 0$ and the derivation of the equation
for the limiting field $E(x)$.
\item[6.] Physical interpretation of the equation for the limiting field $E(x)$. Explicit formulas for the new refraction coefficient
 and magnetic permeability.
\end{itemize}

The uniqueness and existence of the solution are proved  similarly to the proof given in the case of the scattering problem for one
 body. Formulas \eqref{eq4.3} and \eqref{eq4.4} are established as in our theory of EM wave scattering by one body. An important point is the following one:

{\em Each of the $M$ small bodies can be considered under our basic assumption \eqref{eq1.1} as a single scatterer on which the
 incident field $E_e(x)$ is scattered. Therefore formula \eqref{eq3.7} remains valid after replacing $E_0$ by $E_e$, and this
 yields formula \eqref{eq4.4}.}

 Formula \eqref{eq4.3} is derived along the same lines as formula \eqref{eq2.37}. If
 $$(\nabla\times E_e)(x_m) := A_m, \qquad (\nabla\times E_0)(x_m) := A_{0m},\qquad \text{\,\, and\,\,} |S_m| = c_ma^2,$$
  then equations \eqref{eq4.3}-\eqref{eq4.5} imply
\begin{equation}\label{eq4.6}
A_j = A_{0j} - \Big(\nabla\times\sum_{j \neq m}^M \left[ \nabla g(x, x_m), \frac{h(x_m) c_m a^{2-\kappa}}
{i\omega\mu}\tau_m A_m \right]\Big)|_{x=x_j}, \quad 1 \leq j \leq M.
\end{equation}
This is a LAS for finding $A_m$. If $A_m$ are found, then
\begin{equation}\label{eq4.7}
Q_m = -\frac{h(x_m) c_m a^{2-\kappa}}{i\omega\mu}\tau_m A_m.
\end{equation}
For simplicity one may assume in what follows that $c_m = c_0$ and $\tau_m = \tau$ do not depend on $m$. One can write
equation \eqref{eq4.3} as
\begin{equation}\label{eq4.8}
E_e(x_j) = E_0(x_j) - \frac{c_0a^{2-\kappa}}{i\omega\mu}\Big(\sum_{j \neq m}^M\left[\nabla g(x,x_m), \tau(\nabla\times E_e)(x_m)
\right]h(x_m)\Big)|_{x=x_j}, \quad 1 \leq j \leq M.
\end{equation}
{\em The order of the LAS \eqref{eq4.6} and \eqref{eq4.8} can be drastically reduced.}

 Namely, consider a partition of $\Omega$ into a union of small cubes $\Delta_p$, $\displaystyle\cup_{p = 1}^P\Delta_p = \Omega$.
  Assume that the side $b = b(a)$ of $\Delta_p$ is much larger than $d$, $b>>d$,  so that there are many small bodies $D_m$ in
  every cube $\Delta_p$, and
\begin{equation}\label{eq4.9}
\lim_{a \to 0}b(a) = 0.
\end{equation}
Recall that $x_m \in D_m$ is a point inside $D_m$. Let $x_p \in \Delta_p$ be an arbitrary point. For all $x_m \in \Delta_p$ the
values $h(x_m) = h(x_p)$ up to the error that tends to zero as $a \to 0$, because $h$ is a continuous function and
$b(a) \to 0$ as $a \to 0$. The same is true for $\nabla g(x_j, x_m)$ and for $\tau(\nabla\times E_e)(x_m)$. Consequently,
\eqref{eq4.8} implies
\begin{multline}\label{eq4.10}
E_e(x_q) = E_0(x_q) - \frac{c_0}{i\omega\mu}\sum_{q \neq p}^P\left[\nabla g(x_q,x_p), \tau(\nabla\times E_e)(x_p)
\right]h(x_p) a^{2-\kappa}\sum_{x_m \in \Delta_p}1 \\ = E_0(x_q) - \frac{c_0}{i\omega\mu}\sum_{q \neq p}^P
\left[\nabla g(x_q,x_p), \tau(\nabla\times E_e)(x_p) \right]h(x_p)N(x_p)|\Delta_p|.
\end{multline}
Here we have used the assumption \eqref{eq1.2} in the form
\begin{equation}\label{eq4.11}
a^{2-\kappa}\sum_{x_m \in \Delta_p}1 = \int_{\Delta_p}N(x)dx\Big(1 + o(1)\Big) \approx N(x_p)|\Delta_p|,
\end{equation}
where $|\Delta_p|$ is the volume of $\Delta_p$.

Equation \eqref{eq4.10} is the Riemannian sum corresponding to the integral equation:
\begin{equation}\label{eq4.12}
E(x) = E_0(x) - \frac{c_0}{i\omega\mu}\nabla\times\int_{\Omega} g(x,y) h(y)N(y)\tau\nabla\times E(y) dy.
\end{equation}
Thus, the effective field $E_e$ has a limit $E$, as $a \to 0$, and this limit satisfies equation \eqref{eq4.12}.
We have proved the following Theorem.

{\bf Theorem 4.1.} {\em The effective field $E_e(x)$ in $\Omega$ tends to the limit $E(x)$ in $C(\Omega)$
and the limiting field $E(x)$ solves equation  \eqref{eq4.12}}

Let us {\em interpret physically equation \eqref{eq4.12}}. Let us apply the operator $\nabla\times\nabla\times$ to equation
\eqref{eq4.12}. This yields, after using the formulas $\nabla\times\nabla\times = \nabla\nabla\cdot - \nabla^2$ and
 $\nabla\cdot\nabla\times = 0$, the following equation:
\begin{equation}\label{eq4.13}
\nabla\times\nabla\times E = \nabla\times\nabla\times E_0 - \frac{c_0}{i\omega\mu}\nabla\times\int_\Omega
\left(-\nabla^2 g(x,y)\right)
h(y)N(y) \tau\nabla\times E(y) dy.
\end{equation}
Since $\nabla\times\nabla\times E_0 = k^2E_0$ and $-\nabla^2g(x,y) = k^2g(x,y) + \delta(x - y)$, equation \eqref{eq4.13}
can be written as follows:
\begin{equation}\label{eq4.14}
\nabla\times\nabla\times E = k^2E - \frac{c_0}{i\omega\mu}\nabla\times \Big(h(x)N(x) \tau\nabla\times E(x) \Big).
\end{equation}
Assume that $\tau$ is a {\em diagonal} tensor. For example, if $D_m$ are balls, then $\tau_{pq} = \frac{2}{3}\delta_{pq}$,
so $\tau = \frac{2}{3}I$, where $I$ is the unit tensor. In this case
\begin{equation}\label{eq4.15}
\nabla\times\Big(  hN \tau \nabla\times E  \Big) = \frac{2}{3}hN\nabla\times\nabla\times E + \frac{2}{3}\left[ \nabla(hN),
\nabla\times E \right].
\end{equation}
Therefore, in this case equation \eqref{eq4.14} can be rewritten as follows:
\begin{equation}\label{eq4.16}
\nabla\times\nabla\times E = \frac{k^2}{1 + \frac{2c_0}{3 i\omega\mu}h(x)N(x)} - \frac{2c_0}{3i\omega\mu}\,\,
 \frac{\left[ \nabla(hN), \nabla\times E \right]}{1 + \frac{2c_0}{3 i\omega\mu}h(x)N(x)}.
\end{equation}
The physical meaning of this equation becomes clear if one applies the operator $\nabla\times$ to the first equation \eqref{eq2.1}
assuming that $\mu = \mu(x)$, that is, assuming that $\mu$ is a function of $x$.

Then one gets
\begin{equation}\label{eq4.17}
\nabla\times\nabla\times E = i\omega\mu(x)\nabla\times H + i\omega[\nabla\mu(x), H].
\end{equation}
Using the second equation \eqref{eq2.1} one reduces \eqref{eq4.17} to the following equation
\begin{equation}\label{eq4.18}
\nabla\times\nabla\times E = k^2n^2(x)E + \left[ \frac{\nabla \mu}{\mu}, \nabla\times E \right], \qquad k^2 := \omega^2\epsilon\mu(x).
\end{equation}
Comparing \eqref{eq4.18} with \eqref{eq4.16} one concludes that the following Theorem is proved.

{\bf Theorem 4.2.}  {\em The refraction coefficient in the new limiting medium is given by the formula:
\begin{equation}\label{eq4.19}
n(x) = \frac{1}{\sqrt{1 + \frac{2c_0}{3 i\omega\mu}h(x)N(x)}},
\end{equation}
 and  the magnetic permeability in this medium is given by the formula:
\begin{equation}\label{eq4.20}
\mu(x) = \frac{\mu}{1 + \frac{2c_0}{3 i\omega\mu}h(x)N(x)},
\end{equation}
where $\mu = $const is the magnetic permeability in the original medium.}

 Note that according to formulas \eqref{eq4.16} and \eqref{eq4.18} one has:
\begin{equation}\label{eq4.21}
\frac{\nabla \mu(x)}{\mu(x)} = - \frac{2c_0}{3i\omega\mu}\frac{\nabla\left( h(x)N(x) \right)}{1 + \frac{2c_0}{3 i\omega\mu}h(x)N(x)}.
\end{equation}

\section{Creating materials with a desired refraction coefficient and a desired magnetic permeability}\label{sec5}
Formulas \eqref{eq4.19} and \eqref{eq4.20} allow one to give recipes for creating materials with a desired
refraction coefficient or a desired magnetic permeability.

Suppose that one wants to create a material with a desired refraction coefficient $n(x)$ by embedding in a given material
many small impedance particles. One has to choose a bounded domain $\Omega$,  where the small particles should be distributed,
and give a distribution law \eqref{eq1.2}
 for these particles in $\Omega$.  The function $N(x) \geq 0$ in   \eqref{eq1.2} can be chosen by the experimenter.
   Next, one has to give boundary impedances, defined by formula \eqref{eq1.1'}, where $h(x)$, $Re\,\, h \geq 0$, is a continuous
    in $\Omega$ function, which can also be chosen by the experimenter as he/she wishes, as well as the parameter $\kappa \in [0,1)$.

Let us prove the following Theorem:

{\bf Theorem 5.1.} {\em   Any refraction coefficient $n(x)$ can be obtained by choosing a suitable $h(x)$.}

 {\em Proof.} Suppose that
 $$h = h_1(x) + ih_2(x),\quad h_1(x) := Re\, h \geq 0,\quad  N(x) = N =  const, \quad \frac{2c_0 N}{3\omega\mu} := c_1 > 0.$$
  Then formula \eqref{eq4.19} yields
\begin{equation}\label{eq4.22}
n(x) = \frac{1}{\sqrt{1 - ic_1h_1(x) + c_1h_2(x)}}.
\end{equation}
Let us define
$$\sqrt{z} = |z|^{1/2}e^{i\frac{\varphi}{2}}, \quad \varphi = \arg z, \quad 0 \leq \varphi \leq 2\pi.$$
 Since $h_1 \geq 0$ and $h_2$ are arbitrary real-valued functions, let us denote
 $$u(x):=1 + c_1h_2(x), \qquad v(x):=c_1h_1(x), $$
 and write:
\begin{equation}\label{eq4.23}
\frac{1}{\sqrt{1 + c_1h_2(x) - ic_1h_1(x)}} = \frac{1}{\sqrt{u^2 + v^2(x)}}e^{-\frac{i}{2}\arg \big(1 + c_1h_2(x) - ic_1h_1(x)\big)}.
\end{equation}
If $|u|$ and $|v|$ are arbitrary, so is $\frac{1}{\sqrt{u^2 + v^2}}$. The argument $\varphi$ of $1 + c_1h_2(x) - ic_1h_1
\in (\pi, 2\pi)$ if $h_1 \geq 0$, so $-\frac{\varphi}{2} \in \left( -\frac{\pi}{2}, -\pi \right)$.
Choosing  $u$ and $v$ suitably one can get a desirable  amplitude  $\frac{1}{\sqrt{u^2 + v^2}}$ of the refraction coefficient
  and a desirable phase of it.

Theorem 5.1 is proved. \hfill$\Box$

{\bf Example.}  {\em If $-\frac{\varphi}{2} \approx -\pi$ then $Re\,\, n(x) < 0$
and $Im\,\, n(x) < 0$ can be made as small as one wishes, so it will be negligible.
Thus, the obtained material has {\em negative refraction}: the phase velocity is directed opposite to the group velocity
in this material.
 Recall that the phase velocity is $v_p = \frac{\omega}{|k|}\frac{k}{|k|}$, while the group velocity is $v_g = \nabla_k\omega(k)$.}

 Similar reasoning leads to a conclusion that {\em a desired magnetic permeability can also be created.}

 To do this one uses formula \eqref{eq4.19}.  Indeed,
\begin{equation}\label{eq4.24}
\mu(x) = \frac{\mu}{u(x) - iv(x)} = \frac{\mu}{\sqrt{u^2(x) + v^2(x)}}e^{-i\varphi}, \quad \varphi \in (\pi, 2\pi).
\end{equation}
The quantity $\frac{1}{\sqrt{u^2(x) + v^2(x)}}$ can be made arbitrary if $h_1(x) \geq 0$ and $h_2(x)$ can be chosen arbitrarily.
 The argument $\varphi \in (\pi, 2\pi)$ can be chosen arbitrarily.
\newpage
{\bf Remark.} {\em Principal differences of our results and the results of other authors on wave scattering by small
bodies are:

1. For wave scattering by one body: we derive a closed-form explicit formula for the scattering amplitude for small
bodies of an arbitrary shape for four types of the boundary conditions (the Dirichlet, the Neumann, the impedance, 
and the interface (transmission)), see \cite{R632},\cite{R635}.

2. For many-body wave scattering problems for small bodies of arbitrary shapes our condition $ka+ad^{-1}<<1$ allows
one to have $kd<<1$, that is, it allows to have many small particles on the wavelength. This means that the effective field
in the medium in which many small particles are distributed and the above conditions hold the effective field, acting on each small
particle, may differ very much from the incident field. That is,  the multiple scattering effects are essential and cannot
be neglected.

3. For solving problems of many-body wave scattering by small bodies an efficient numerical method is developed.

4. For many-body wave scattering problems the limiting equation for the effective field is derived
in the limit when the size of small impedance particles tends to zero while the number of these particles tends to infinity.  

5. A recipe is given for creating materials with a desired refraction coefficient by embedding many
small particles with prescribed boundary impedances into a given material.
}
\newpage

\bibliographystyle{apalike}

\end{document}